\newtheorem{theorem}{Theorem}[section]
\newtheorem{lemma}[theorem]{Lemma}
\newtheorem{assumption}{Assumption}
\newtheorem{definition}[theorem]{Definition}
\title{Lattice piecewise affine approximation of explicit model predictive control with application to satellite attitude control}
\author{Zhengqi Xu\footnote{School of Mechanical Engineering and Automation; 23s153058@stu.hit.edu.cn.}, Jun Xu\footnote{Associate Professor, School of Mechanical Engineering and Automation; xujunqgy@hit.edu.cn.} and Ai-Guo Wu\footnote{Full Professor, School of Mechanical Engineering and Automation; agwu@hit.edu.cn.}}
\affil{Harbin Institute of Technology, Shenzhen 518055, China}
\author{Shuning Wang\footnote{Full Professor, Department of Automation; swang@tsinghua.edu.cn.}}
\affil{Tsinghua University, Beijing 100084, China}
\begin{document}
\maketitle
\begin{abstract}
Satellite attitude cotrol is a crucial part of aerospace technology, and model predictive control(MPC) is one of the most promising controllers in this area, which will be less effective if real-time online optimization can not be achieved. Explicit MPC converts the online calculation into a table lookup process, however the solution is difficult to obtain if the system dimension is high or the constraints are complex. The lattice piecewise affine(PWA) function was used to represent the control law of explicit MPC, although the online calculation complexity is reduced, the offline calculation is still prohibitive for complex problems. In this paper, we use the sample points in the feasible region with their corresponding affine functions to construct the lattice PWA approximation of the optimal MPC controller designed for satellite attitude control. The asymptotic stability of satellite attitude control system under lattice PWA approximation has been proven, and simulations are executed to verify that the proposed method can achieve almost the same performance as linear online MPC with much lower online computational complexity and use less fuel than LQR method.



\end{abstract}
\saythanks

\section{Introduction}
    \lettrine{I}{n} recent decades, aerospace technology has developed rapidly while the attitude control technology of spacecraft like satellites is the basis for executing complex and delicate space missions such as observation and on-orbit services. In many applications, the attitude and angular velocity of the spacecraft must be able to track the time-varying trajectories or maintain at preset values, where the main difficulties faced are various interferences from the space environment and highly coupled dynamics of spacecraft.


    PID control method is commonly used for satellite attitude control\cite{show2002spacecraft}, while effective, new methods with distinct advantages are continually being proposed. Chen\cite{chen1993sliding} proposed a robust sliding mode controller to address uncertainty. Su\cite{su2014velocity} proposed an alternative design for velocity-free asymptotic attitude stabilization of rigid spacecraft under the influence of actuator constraints. Benziane\cite{benziane2016velocity} designed a feedback approach for attitude control that only uses a body-referenced vector. Mackunis\cite{mackunis2016adaptive} proposed an attitude controller based on an adaptive neural network which was analyzed through the Lyapunov theorem. Liang\cite{liang2019observer} proposed an observer-based output feedback fault-tolerant control scheme.

    Due to the high cost of spacecraft, long mission period, and requirements of improving efficiency and accuracy as much as possible under external interference, controllers have to operate reliably at the limit of achievable performance\cite{eren2017model}. Model predictive control(MPC) that can provide expected performance and security while satisfying constraints has become a very promising controller. Weiss\cite{weiss2015model} applied MPC for station keeping and momentum management in geosynchronous satellites, achieving favorable results. Caverly\cite{caverly2018split} extended MPC to control position, attitude, and momentum wheels concurrently, demonstrating fuel savings in a one-year simulation experiment.

    MPC or receding horizon control, is a form of control in which the current action is obtained by solving a finite horizon open-loop optimal control problem at each sampling instant. The challenges associated with online computation make the application of MPC difficult for high-frequency systems like satellite attitude control which with limited computing power. Bemporad\cite{bemporad2002explicit} introduced explicit MPC in 2002, showing that the solution function of a multi-parametric quadratic programming(mpQP) problem is piecewise affine(PWA), which allows the online calculation of the optimization problem to be transformed into a table lookup process, thereby reducing the time required for online optimization. Hegrenas\cite{hegrenaes2005spacecraft} used explicit MPC for attitude control of micro-satellites to solve the problem of insufficient computing power.

    While explicit MPC effectively reduces online calculation time, it faces challenges when dealing with more complex optimization problems. The complexity often leads to a significant increase in difficulty for offline calculation of explicit solutions. \cite{wen2009analytical} proposed a solution by utilizing a lattice PWA function to analytically represent the solution of explicit MPC and introduced a method for removing redundant parameters from the lattice PWA function. \cite{xu2016irredundant} established the necessary and sufficient conditions, along with related algorithms, for the irredundant lattice PWA representation. This advancement was subsequently applied to representing the solution of explicit MPC. Nevertheless, as system dimensions and constraints increase, the explicit solution becomes exceedingly complex and challenging to represent. In light of these considerations, this paper intends to employ a lattice PWA function to approximate the explicit solution obtained by KKT conditions. Instead of using a lattice PWA function to represent the entire explicit control law, we only use regions of interest to construct an approximated function. The simulations are conducted to assess the efficacy of the lattice PWA approximation in increasing online computing speed.

    The remainder of this paper is structured as follows: Section \ref{chap:2} describes the dynamic model of satellite attitude and formulate the satellite attitude control as an MPC problem. Section \ref{chap:3} describes the process of getting explicit control laws, the construction of lattice PWA approximation of explicit control law, and analysis of the system stability. Section \ref{chap:4} presents the simulation results of linear MPC, lattice PWA approximation, and LQR of satellite attitude control to show the advantages of our method in terms of online computing speed and fuel savings.

\section{Problem formulation of satellite attitude control}\label{chap:2}

The control structure for attitude control is illustrated in Fig. \ref{fig:control_diagram}. Before formulating the dynamic model of satellite attitude, it is essential to explain the mathematical description of the satellite's attitude. Let $\mathcal{F}_b$ denote the satellite body coordinate system, with its $X$, $Y$, and $Z$ axes aligning with the inertial principal axes of the satellite. Concurrently, $\mathcal{F}_o$ denote the orbital coordinate system of the satellite, $\mathcal{F}_o$'s origin is situated at the satellite's center of mass, $OZ_o$ is directed towards the Earth's center, $OX_o$ is perpendicular to the axis of $OZ_o$ in the orbital plane, pointing forward and the orientation of $OY_o$ is determined by the right-hand rule.

Mathematically, the attitude of the satellite is the attitude of the $\mathcal{F}_b$ coordinate system relative to the $\mathcal{F}_o$ coordinate system, which can also be described as Euler angles. The objective of attitude control in this paper is to maintain the Euler angles of the satellite at predetermined values and saving fuels as much as possible. In Section \ref{chap:2}, we will construct the dynamic model of satellite attitude and formulate the linear MPC problem of satellite attitude control. In the next section, we will generate sample points $p_i$ in the feasible region and calculate the corresponding explicit control law to construct the lattice PWA approximation. It is important to note that the specific details of the actual attitude measurement process are omitted in this paper.
    
\begin{figure}[htbp]
\centering
\includegraphics[width = 0.9\linewidth]{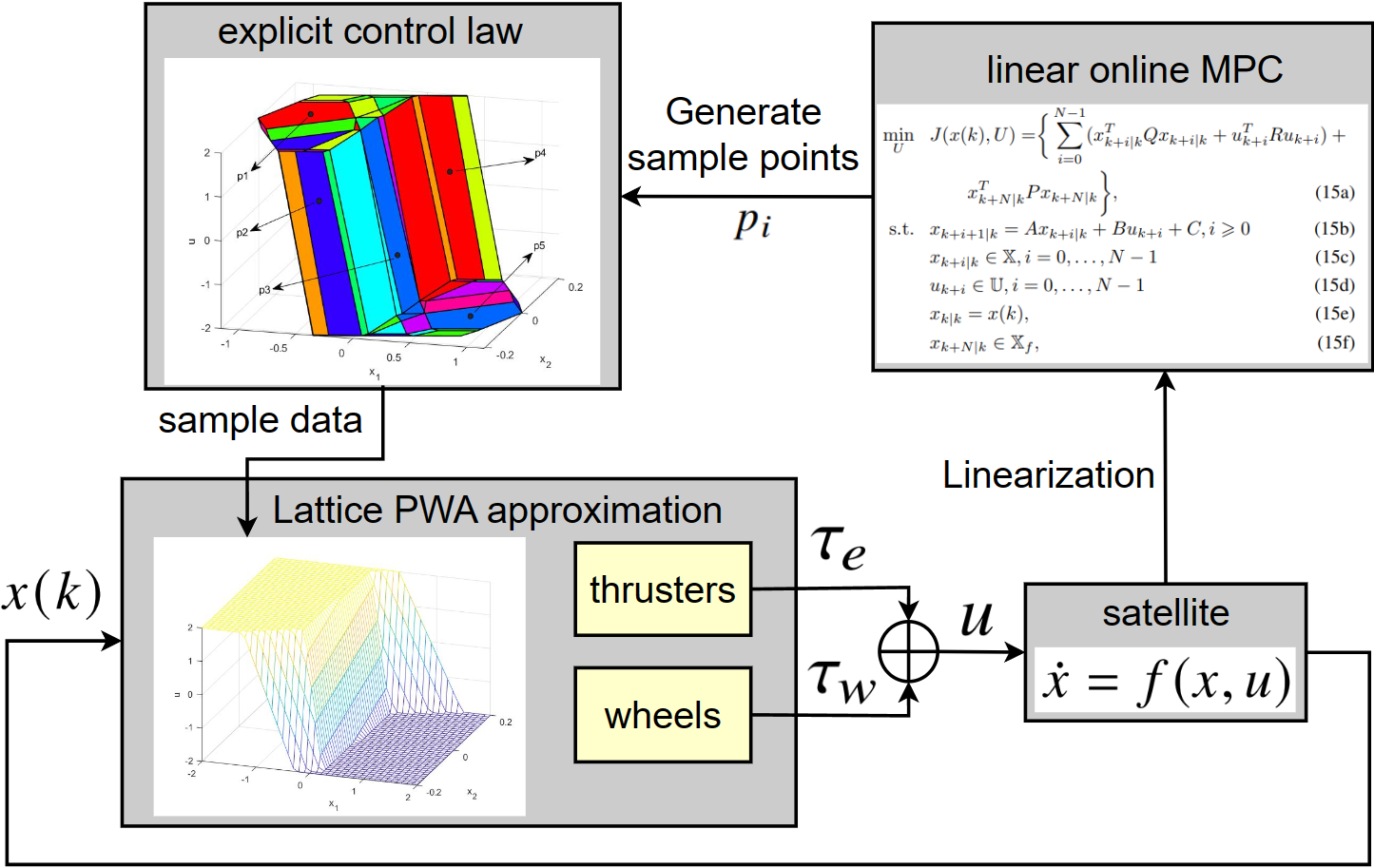}
\caption{Satellite attitude control diagram}
\label{fig:control_diagram}
\end{figure}

\subsection{Dynamics of satellite attitude}

Euler parameters are established through a definition rooted in rotation angle $\theta$ and rotation axis $p$, with a mapping that is defined as
\begin{equation}\notag
\eta=\mathrm{cos}\frac{\theta}{2}, \epsilon=p\mathrm{sin}\frac{\theta}{2},
\end{equation}
the corresponding rotation matrix is
\begin{equation}\label{eq:k1}
R(\eta,\epsilon)=\boldsymbol{1}+2\eta\epsilon^{\times}+2\epsilon^{\times}\epsilon^{\times},
\end{equation}
where $(\cdot)^{\times}$ represents the cross product operator that is defined as
\begin{equation}\notag
\epsilon^{\times}=\begin{bmatrix}
    0&-\epsilon_3&\epsilon_2\\
    \epsilon_3&0&-\epsilon_1\\
    -\epsilon_2&\epsilon_1&0
\end{bmatrix}
\end{equation}

As $R(\eta,\epsilon)\in SO(3)$, we have
\begin{equation}\label{eq:k2}
\dot{R}^b_o=(\omega^b_{bo})^{\times}R^b_o=-(\omega^b_{ob})^{\times}R^b_o,
\end{equation}
where $\omega^b_{bo}$ is defined as the angular velocity of $\mathcal{F}_b$ relative to $\mathcal{F}_o$, expressed in $\mathcal{F}_b$, and $R^b_o$ represents the rotation matrix from $\mathcal{F}_b$ to $\mathcal{F}_o$. By referring to (\ref{eq:k1}), the kinematic differential equation can be derived as
\begin{subequations}
\begin{eqnarray}
& &\dot{\eta}=-\frac{1}{2}\epsilon^T\omega^b_{bo}\label{eq:k15},\\
& &\dot{\epsilon}=\frac{1}{2}[\eta\boldsymbol{1}+\epsilon^{\times}]\omega^b_{bo}\label{eq:k15.1}.
\end{eqnarray}
\end{subequations}

The following content refers to \cite{hegrenaes2005spacecraft} to establish the attitude dynamics model of a micro-satellite. The motion equation of the wheel group can be written as
\begin{subequations}
\begin{eqnarray}
& &\dot{h}_b=\tau_e-[J^{-1}(h_b-\Lambda h_w)]\times h_b,\label{eq:d1}\\
& &\dot{h}_w=\tau_w,
\end{eqnarray}
\end{subequations}
where $h_w$ denotes the $L \times 1$ vector of the axial angular momentum of the wheels, $\tau_e$ is the $3 \times 1$ vector of the external torque, $\tau_w$ is the $L \times 1$ vector of the internal axial torques, and $\Lambda$ is the $3 \times L$ matrix with its columns comprising the axial unit vectors of the $L$ momentum exchange wheels. Let $\omega^b_{ib}$ signify the angular velocity of the satellite coordinate system $\mathcal{F}_b$ relative to the inertial coordinate system $\mathcal{F}_i$. Subsequently, the vector $h_b$ represents the total angular momentum of the spacecraft in the body coordinate system, expressed as
\begin{equation}\label{eq:d2}
h_b=J\omega^b_{ib}+\Lambda h_w,
\end{equation}
where $J$ is an inertia-like matrix, defined as
\begin{equation}\notag
J\triangleq I-\Lambda I_w\Lambda^T,
\end{equation}
the matrix $I$ represents the inertia tensor of the spacecraft, including the flywheel. Additionally, the matrix $I_w = \text{diag}\{I_{w1}, I_{w2}, \ldots, I_{wL}\}$ incorporates the axial momentum of inertia of the flywheel. The axial angular momentum of the flywheel can be expressed in terms of the satellite angular velocity and the axial angular velocity of the flywheel relative to the satellite, denoted as $\omega_w$, this relationship can be formulated as
\begin{equation}\label{eq:d3}
h_w=I_w\Lambda^T\omega^b_{ib}+I_w\omega_w,
\end{equation}
where $\omega_w=[\omega_{w1},\omega_{w1},\dots,\omega_{wL}]^T$ is a $L\times1$ vector, and these relative angular velocities are measured by tachometers fixed on the satellite. 
 
(\ref{eq:d1}) can also be expressed in terms of angular velocity, by defining $\mu\triangleq[h_b,h_w]^T$ and $v\triangleq[\omega^b_{ib},\omega_w]^T$, we can write (\ref{eq:d2}) and (\ref{eq:d3}) into a compact form like
\begin{equation}\label{eq:d4}
\mu=\Gamma v,\hspace{2em}\quad\hspace{2em}\Gamma=
\begin{bmatrix}
I&\Lambda I_w\\
I_w\Lambda^T&I_w
\end{bmatrix}.
\end{equation}

By combining the matrix inversion lemma and (\ref{eq:d4}), we get
\begin{equation}\notag
\begin{bmatrix}
\dot{\omega^b_{ib}}\\
\dot{\omega_w}
\end{bmatrix}=
\begin{bmatrix}
J^{-1}&-J^{-1}\Lambda\\
-\Lambda^TJ^{-1}&\Lambda^TJ^{-1}\Lambda+I_w^{-1}
\end{bmatrix}
\begin{bmatrix}
\dot{h}_b\\
\dot{h}_w
\end{bmatrix},
\end{equation}
where
\begin{subequations}
\begin{eqnarray}\label{eq:d5}
&&\dot{\omega^b_{ib}}=J^{-1}[-(\omega^b_{ib})^{\times}(I\omega^b_{ib}+\Lambda I_w\omega_w)+\tau_e]-\Lambda\tau_w,\\
&&\dot{\omega_w}=\Lambda^TJ^{-1}[(\omega^b_{ib})^{\times}(I\omega^b_{ib}+\Lambda I_w\omega_w)-\tau_e]+[\Lambda^TJ^{-1}\Lambda+I_w^{-1}]\tau_w.
\end{eqnarray}
\end{subequations}

As can be seen from (\ref{eq:d5}), the angular velocity is given in terms of $\mathcal{F}_b$ relative to $\mathcal{F}_i$, while the kinematics are given relative to $\mathcal{F}_o$. However, it would be better if we could describe the attitude of $\mathcal{F}_b$ relative to $\mathcal{F}_o$. This can be achieved through the following equations
\begin{subequations} 
\begin{eqnarray}\label{eq:d6}
&&\omega^b_{ib}=\omega^b_{ob}+R^b_o\omega^o_{io},\\
&&\dot{\omega^b_{ib}}=\dot{\omega^b_{ob}}+\dot{R^b_o}\omega^o_{io},
\end{eqnarray}
\end{subequations}
where $\omega^o_{io}=[0,-\omega_0,0]^T$, $\omega_0$ is assumed to be a constant, equal to the average angular velocity of $\mathcal{F}_o$, and expressed in $\mathcal{F}_i$. This means that the orbit is circular. 
 
Through (\ref{eq:k2}) and (\ref{eq:d6}), we can rewrite (\ref{eq:d5}) as
\begin{subequations}
\begin{eqnarray}
& &\dot{\omega}^b_{ob}=\hat{f}_{\mathrm{inert}}+\hat{f}_{\tau}+\hat{f}_{g}+\hat{f}_{\mathrm{add}},\label{eq:d65}\\
& &\dot{\omega}_{w}=\bar{f}_{\mathrm{inert}}+\bar{f}_{\tau}+\bar{f}_{g},\label{eq:d65.1}
\end{eqnarray}
\end{subequations}
and variables in (\ref{eq:d65}) and (\ref{eq:d65.1}) can be stated as
\begin{subequations}
\begin{eqnarray}\label{eq:d7}
& &\hat{f}_{\mathrm{inert}}=J^{-1}[-(\omega^b_{ob}-\omega_0c_2)^{\times}\times(I[\omega^b_{ob}-\omega_0c_2]+\Lambda I_w\omega_w)],\notag\\
& &\bar{f}_{\mathrm{inert}}=\Lambda^TJ^{-1}[-(\omega^b_{ob}-\omega_0c_2)^{\times}\times(I[\omega^b_{ob}-\omega_0c_2]+\Lambda I_w\omega_w)],\notag\\
& &\hat{f}_{\tau}=J^{-1}\tau-J^{-1}\Lambda\tau_w,\notag\\
& &\bar{f}_{\tau}=-\Lambda^TJ^{-1}\tau+[\Lambda^TJ^{-1}\Lambda+I^{-1}_w]\tau_w,\notag\\
& &\hat{f}_{g}=J^{-1}[3\omega^2_0c_3\times(Ic_3)],\notag\\
& &\bar{f}_{g}=-\Lambda^TJ^{-1}[3\omega^2_0c_3\times(Ic_3)],\notag\\
& &\hat{f}_{\mathrm{add}}=\omega_0\dot{c}_2.\notag
\end{eqnarray}
\end{subequations}
where $c_i$ denotes the $i$th column of the rotation matrix $R^b_{o}$.

Choosing the state vector of the system as $x=[\omega^b_{ob},\omega_{w},\epsilon]^T$, and the input vector is $u\triangleq[\tau^T,\tau_w^T]^T=[\tau_1,\tau_2,\tau_3,\tau_w]^T$, according to (\ref{eq:k15}) and (\ref{eq:d65}), we can have the dynamics model of the system as 
\begin{equation}\label{eq:z1}
	\dot{x}=f(x,u),
\end{equation}
where $\omega^b_{ob}\triangleq[\omega_1,\omega_2,\omega_3]^T$ is the angular velocity of the satellite relative to the orbital coordinate system, $\omega_{w}$ is the angular velocity of the flywheel, $\epsilon\triangleq[\epsilon_1,\epsilon_2,\epsilon_3]^T$ and $\eta$ together constitute the Euler parameters.

\subsection{Linear MPC problem for satellite attitude control}

The objective of attitude control is to drive the satellite's state to expected values while satisfying the state, input, system dynamics constraints, and saving fuels as much as possible. As we adopt a linear MPC approach in this paper, it is necessary to linearize the nonlinear system model. By choosing equilibrium point $x_s = \boldsymbol{0}$, $u_s = \boldsymbol{0}$ as the linearization point, the linear model can be obtained through the first-order Taylor expansion of (\ref{eq:z1}), which can be stated as
\begin{equation}\label{eq:linearplant}
\dot{x}(t)=A_cx(t)+B_cu(t)+C_c,
\end{equation}
where $t$ is a continuous time variable, $A_c=\left.\frac{\partial f(x,u)}{\partial x}\right|_{x=x_s,u=u_s}$, $B_c=\left.\frac{\partial f(x,u)}{\partial u}\right|_{x=x_s,u=u_s}$, $C_c=f(x_s,u_s)-\left.\frac{\partial f(x,u)}{\partial x}\right|_{x=x_s,u=u_s}x_s-\left.\frac{\partial f(x,u)}{\partial u}\right|_{x=x_s,u=u_s}u_s$. Assuming $T_s$ to be the sampling interval, (\ref{eq:linearplant}) is then converted into an equivalent discrete-time form as
\begin{equation}\label{eq:Ldiscreteplant}
x(k+1)=Ax(k)+Bu(k)+C,
\end{equation}	
where $k$ is a discrete time variable, $x(k)\in \mathbb{R}^n$ is the state variable, $u(k)\in\mathbb{R}^m$ is the input variable, $A\in\mathbb{R}^{n\times n}$,$B\in\mathbb{R}^{n\times m}$, $C\in\mathbb{R}^{n\times 1}$.

The linear MPC problem associated with attitude control at sampling time $k$ can be expressed as the following optimization problem:
\begin{subequations}
\begin{eqnarray}
\min_U\hspace{-1.2em}&& J(x(k),U)=\sum^{N-1}_{i=0}(x_{k+i|k}^TQx_{k+i|k}+u_{k+i}^TRu_{k+i})+x_{k+N|k}^TPx_{k+N|k},\label{mpc1:0}\\
\mathrm{s.t.}\hspace{-1.2em}
&&x_{k+i+1|k}=Ax_{k+i|k}+Bu_{k+i}+C, i \geqslant 0\label{mpc1:1}\\
&&x_{k+i|k}\in\mathbb{X}, i = 0,\dots,N\label{mpc1:3}\\
&&u_{k+i}\in\mathbb{U}, i = 0,\dots,N-1\label{mpc1:4}\\
&&x_{k|k}=x(k),\label{mpc1:2}\\
&&x_{k+N|k}\in\mathbb{X}_f,\label{mpc1:5}
\end{eqnarray}
\end{subequations}
where $U=[u_k^T,\dots,u_{k+N-1}^T]^T$ is the optimized variable, $x(k)$ is the system state measured at time instant $k$, $x_{k+i|k}\in\mathbb{R}^n$, $u_{k+i} \in \mathbb{R}^m$ denote the predicted state and input variables at time step $k+i$, $Q=Q^T\succcurlyeq0$, $R =R^T\succ0$ and $P\succ0$ are the weighting matrices of the state, input and terminal state variables, $N$ is the prediction horizon, (\ref{mpc1:3}) and (\ref{mpc1:4}) denote the linear constraints on the state and input respectively, which that can be expressed as $x_{\min} \leqslant x_{k+i|k}\leqslant x_{\max}$ and $u_{\min} \leqslant u_{k+i}\leqslant u_{\max}$, $\mathbb{X}_f\subset\mathbb{X}$ is a terminal constraint. After the optimization problem (\ref{mpc1:0}) is solved, the first item $u_k^*$ of the optimal sequence $U^*=[(u_k^*)^T,\dots,(u_{k+N-1}^*)^T]^T$ is applied to the plant, this implicit MPC law can be stated as $k_0(x(k)) = u_k^*$, and the corresponding value function is defined as $J^*(x(k)) = J(x(k),U^*)$. 

Function (\ref{mpc1:0}) indicates that the optimization problem of attitude control is a quadratic programming(QP) problem. As mentioned before, lattice PWA approximation is constructed by the control law of explicit MPC, and the process of getting explicit control law need us to convert the QP problem (\ref{mpc1:0}) into a mpQP form as 
\begin{subequations}
\begin{eqnarray}\label{youhua2}
\min_z\hspace{-1em}&& \frac{1}{2}z^THz,\\
\mathrm{s.t.}\hspace{-1em}&&Gz\leqslant W+Sx(k),
\end{eqnarray}
\end{subequations}
where $H$, $G$, $W$, $S$ are matrices  calculated from (\ref{mpc1:0})-(\ref{mpc1:5}).

\section{Lattice PWA approximation of explicit MPC}\label{chap:3}
\subsection{Explicit MPC}
Explicit MPC calculates explicit solutions of mpQP, transforms the online calculation into a table lookup process. One important feature of the explicit control law is PWA, which is illustrated in Lemma \ref{lem:expmpc}.
\begin{lemma}[\cite{bemporad2002explicit}]\label{lem:expmpc}
    Consider the mpQP problem (\ref{youhua2}) and l et $H\succ0$. Then the set of feasible parameters $\Omega\subset\mathbb{X}$ is convex, the optimizer $U^*(x):\Omega\mapsto\mathbb{R}^{N_c\cdot m}$ is continuous PWA, and the optimal solution $J^*(x):\Omega\mapsto\mathbb{R}$ is continuous, convex and piecewise quadratic.
\end{lemma}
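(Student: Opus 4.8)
The plan is to follow the standard multiparametric quadratic programming argument built on the Karush--Kuhn--Tucker (KKT) conditions. First I would dispose of the convexity of $\Omega$: since $\Omega$ is exactly the image of the polyhedron $\{(x,z) : Gz \leq W + Sx\}\subset\mathbb{R}^{n+N_c m}$ under the linear projection $(x,z)\mapsto x$, and the linear image of a convex (indeed polyhedral) set is convex, $\Omega$ is convex. Because $H\succ 0$, the objective $\tfrac12 z^THz$ is strictly convex and coercive, so for every $x\in\Omega$ the feasible set $P(x):=\{z : Gz\leq W+Sx\}$ is nonempty and closed and the QP admits a \emph{unique} minimizer $z^*(x)=U^*(x)$; this uniqueness will be the lever for the continuity statements later.

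Next I would use the KKT system, which for a strictly convex QP is both necessary and sufficient for optimality:
\begin{align*}
& Hz + G^T\lambda = 0, \qquad \lambda\geq 0, \qquad Gz\leq W+Sx,\\
& \lambda_i\,(G_iz - W_i - S_ix) = 0 \quad \text{for all } i.
\end{align*}
For a fixed candidate active set $\mathcal{A}$ I would put $\lambda_i = 0$ for $i\notin\mathcal{A}$, impose $G_{\mathcal{A}}z = W_{\mathcal{A}}+S_{\mathcal{A}}x$, and eliminate $z = -H^{-1}G^T\lambda$; when $G_{\mathcal{A}}$ has full row rank this gives $\lambda_{\mathcal{A}} = -(G_{\mathcal{A}}H^{-1}G_{\mathcal{A}}^T)^{-1}(W_{\mathcal{A}}+S_{\mathcal{A}}x)$ and hence $z = z_{\mathcal{A}}(x)$, an explicit affine function of $x$. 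The set of parameters for which $\mathcal{A}$ is the optimal active set --- the critical region $CR_{\mathcal{A}}$ --- is then cut out by primal feasibility $Gz_{\mathcal{A}}(x)\leq W+Sx$ and dual feasibility $\lambda_{\mathcal{A}}(x)\geq 0$, a finite list of affine inequalities in $x$, so each $CR_{\mathcal{A}}$ is a polyhedron. Since the constraint index set is finite there are only finitely many $\mathcal{A}$, the regions $\{CR_{\mathcal{A}}\}$ form a finite polyhedral cover of $\Omega$ on which $U^*(\cdot)$ agrees piecewise with affine maps --- that is, $U^*$ is PWA --- and $J^*(x) = \tfrac12 z_{\mathcal{A}}(x)^THz_{\mathcal{A}}(x)$ is quadratic on each piece, hence piecewise quadratic on $\Omega$. (Note that in (\ref{youhua2}) the cost does not involve $x$, so $z^*(x)$ is just the $H$-norm projection of the origin onto the moving polyhedron $P(x)$, which makes the continuity step transparent.)

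It remains to establish convexity of $J^*$ and continuity of $U^*$ and $J^*$. Convexity of $J^*$ follows by partial minimization: the function equal to $\tfrac12 z^THz$ on $\{(x,z):Gz\leq W+Sx\}$ and $+\infty$ elsewhere is jointly convex in $(x,z)$, and minimizing a jointly convex function over part of its arguments returns a convex function of the remaining ones, namely $J^*(x) = \inf_z(\,\cdot\,)$. For continuity of $U^*$ I would combine uniqueness of the minimizer with the fact that the multifunction $x\mapsto P(x)$ --- a polyhedron with fixed normal matrix $G$ and an affinely varying right-hand side --- is inner and outer semicontinuous wherever it is nonempty, and invoke a Berge-type maximum-theorem argument: a single-valued optimizer of a continuous objective over a continuous feasible map (with compact sublevel sets, guaranteed here by coercivity of the cost) is continuous. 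Continuity of $J^*$ is then immediate from $J^*(x) = \tfrac12 U^*(x)^THU^*(x)$, or alternatively from the fact that a finite-valued convex function is continuous on the interior of its domain.

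The step I expect to be the real obstacle is the careful treatment of degeneracy. When $G_{\mathcal{A}}$ is rank-deficient (primal degeneracy) the displayed closed form for $z_{\mathcal{A}}(x)$ is not directly available and one must work with a maximal full-rank subsystem or the reduced Hessian; when several active sets are simultaneously optimal on overlapping full-dimensional regions (dual degeneracy) one must check that the candidate affine pieces really glue into a single continuous PWA function. In both cases the resolution is uniqueness of $z^*(x)$ once more: the competing affine expressions must coincide wherever their critical regions overlap and on shared facets, so the PWA map --- and with it the asserted continuity --- is well defined independently of how the active sets are enumerated.
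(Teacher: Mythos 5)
Your argument is correct and is exactly the standard Bemporad-et-al.\ proof that the paper is citing: the paper itself gives no proof of this lemma (it is quoted from \cite{bemporad2002explicit}), but the KKT active-set elimination, the affine optimizer $z^*=H^{-1}G^T_{\mathcal{A}^*}(G_{\mathcal{A}^*}H^{-1}G^T_{\mathcal{A}^*})^{-1}(W_{\mathcal{A}^*}+S_{\mathcal{A}^*}x)$, and the polyhedral critical regions you construct are precisely what the paper reproduces in (\ref{econtrollaw1})--(\ref{econtrollaw3}). Your remaining ingredients --- convexity of $\Omega$ by projection, convexity of $J^*$ by partial minimization, continuity of $U^*$ from uniqueness plus the projection/Berge argument, and the explicit flagging of degenerate active sets as the delicate case --- match the original reference, so there is nothing to correct.
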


Then we define a continuous PWA function as shown in Definition \ref{def:PWA}:

\begin{definition}\label{def:PWA}
A function $f:D\mapsto\mathbb{R}$, where $D\subset\mathbb{R}^n$ is convex, is continuous PWA when it satisfies the following conditions:

(1) $D$ can be divided into a finite number of nonempty convex regions, i.e., $D=\bigcup^{I_O}_{i=1}O_i$, $O_i\cap O_{i'}=\emptyset(i \ne i')$, $O_i \ne \emptyset$, $I_O$ is the number of regions.

(2) $f$ is defined by hyperplane $l_{\rm loc(i)}:O_i\mapsto\mathbb{R}$ in each region $O_i$, i.e.
\begin{equation}\label{eq:loc}
    f(x)=l_{\rm loc(i)}(x)=\sum_{j=1}^{n}a_{ij}x_j+b_i, \forall x\in O_i
\end{equation}

(3) $f$ is continuous at the boundary of any two regions, i.e., 
\begin{equation}\notag
l_{\rm loc(i)}(x)=l_{\rm loc(i')}(x),\forall x \in O_i\cap O_{i'}.
\end{equation}
\end{definition}

Explicit MPC divides the control process into offline and online calculations. In the offline process, the state space of the system is divided into critical regions by solving the mpQP problem (\ref{youhua2}), and the explicit control rules corresponding to each region can be obtained through KKT conditions. The online calculation is changed from solving a complex QP problem to a simple table lookup process: at each moment, the current state is first obtained to determine the critical region where the system is in, then calculations are performed based on the explicit expression corresponding to the region.
 
First, we illustrate the process of offline calculation. The KKT conditions of mpQP (\ref{youhua2}) can be listed as
\begin{subequations}
\begin{eqnarray}\label{kkt}
& &Hz^*+G^T_{\mathcal{A}^*}\lambda^*+G^T_{\mathcal{N}^*}\mu ^*=0,\\
& &G_{\mathcal{A}^*}z^*=W_{\mathcal{A}^*}+S_{\mathcal{A}^*}x,\label{act}\\
& &G_{\mathcal{N}^*}z^*<W_{\mathcal{N}^*}+S_{\mathcal{N}^*}x,\label{inact}\\
& &\lambda^*\geqslant0,\label{d}\\
& &\mu^*\geqslant0,\\
& &\lambda^{*T}(G_{\mathcal{A}^*}z^*-W_{\mathcal{A}^*}-S_{\mathcal{A}^*}x)=0,\\
& &\mu^{*T}(G_{\mathcal{N}^*}z^*-W_{\mathcal{N}^*}-S_{\mathcal{N}^*}x)=0,
\end{eqnarray}
\end{subequations}
where $x$ represents $x(k)$ for simplicity, (\ref{act}) and (\ref{inact}) respectively represent the active constraints and inactive constraints of the optimal solution $z^*$, and $G_j, W_j, S_j$ respectively represent the $j$-th row of $G$, $W$, $S$. The indices of the effective constraints and the ineffective constraints can be expressed as a set $\mathcal{A}^*$ and $\mathcal{N}^*$ respectively. For a fixed effective constraint set ${\mathcal{A}^*}$, if $G_{\mathcal{A}^*}$ is of full rank, then we have
\begin{subequations}
\begin{eqnarray}
& &z^*=H^{-1}G^T_{\mathcal{A}^*}(G_{\mathcal{A}^*}H^{-1}G^T_{\mathcal{A}^*})^{-1}(W_{\mathcal{A}^*}+S_{\mathcal{A}^*}x),\label{econtrollaw1}\\
& &-(G_{\mathcal{A}^*}H^{-1}G^T_{\mathcal{A}^*})^{-1}(W_{\mathcal{A}^*}+S_{\mathcal{A}^*}x)\geqslant0,\label{econtrollaw2}\\
& &GH^{-1}G^T_{\mathcal{A}^*}(G_{\mathcal{A}^*}H^{-1}G^T_{\mathcal{A}^*})^{-1}(W_{\mathcal{A}^*}+S_{\mathcal{A}^*}x)\leqslant W+Sx,\label{econtrollaw3}
\end{eqnarray}
\end{subequations}
where (\ref{econtrollaw1}) indicates that the optimal solution $z^*$ is an affine expression of $x$, and the critical region $CR_i$ corresponding to the active constraint set can be obtained through (\ref{econtrollaw2}) and (\ref{econtrollaw3}). Suppose there are $N_{\mathrm{mpc}}$ critical regions, after linear transformation from $z^*$ to $U^*$, the explicit control law $U_i(x)$ of region $CR_i$ can be written as 
\begin{subequations}
\begin{eqnarray}
& &U_i=f_ix+g_i,x\in CR_i,\label{eq:explaw1}\\
& &CR_i:H_ix\leqslant K_i,i=1,\dots,N_{\mathrm{mpc}},\label{eq:explaw2}
\end{eqnarray}
\end{subequations}
It is noted that all the $N_{\mathrm{mpc}}$ critical regions have to be traversed to determine (\ref{eq:explaw1}) - (\ref{eq:explaw2}), and the obtained $U_i$ and $CR_i$ are stored in a lookup table. During online calculation, for a current state $x$, corresponding $f_i$ and $g_i$ can be searched in the lookup table, and the optimal input can be written as
\begin{equation}
    u^*(x) = [\mathbf{1}_m,\mathbf{0}_{m\times N_u -m}]\cdot(f_ix+g_i),
\end{equation}
where $\mathbf{1}_m\in\mathbb{R}^m$ is all-one vector, $\mathbf{0}_{m\times N_u -1}\in\mathbb{R}^{m\times N_u -1}$ is all-zero vector. It is important to note that the explicit solution $u^*(x)$ obtained through the KKT conditions is PWA, which is described in Lemma \ref{lem:expmpc}.

\subsection{Lattice PWA function}\label{chap:3.2}
The explicit control law of attitude control can be too complicated to solve, so this paper choose the lattice PWA function to approximate the explicit control law. As early as the 1960s, Wilkinson R H. pointed out that the set of linear functions defined on convex sets and the binary operations $\max$, $\min$ constitute an abstract algebraic lattice that can represent any PWA function\cite{wilkinson1963method}, and the definition of lattice PWA function is as follows.
\begin{lemma}[\cite{wilkinson1963method}]\label{lem:lattice}
Let $f:D\mapsto\mathbb{R}$ be a continuous PWA function , then $f$ can be represented as
\begin{equation}\label{eq:lattice}
f(x)=\max_{i=1,\cdots,N_t}\left\{ \min_{j\in I_{\geqslant,i}}\{l_j(x)\}\right\},\forall x\in \mathbb{R},
\end{equation}
in which $l_j$ is an affine function representing a literal of the lattice PWA representation (\ref{eq:lattice}), $\min_{j\in I_{\geqslant,i}}\{l_j\}$ is called a term , and the index set $I_{\geqslant,i}$ is defined as
\begin{equation}
    I_{\geqslant,i}=\{j|l_j(x)\geqslant l_{\rm loc(i)}(x),x\in O_i\}
\end{equation}
$N_t$ is the number of terms, and $O_i$ is a base region, $l_{\rm loc(i)}(x)$ is the local function of $O_i$ as defined in (\ref{eq:loc}). The base region $O_i$ is a region such that no affine functions intersect with $l_{loc(i)}(x)$ in the interior of $O_i$, i.e., 
\begin{equation}\label{eq:base}
    \{x|l_j(x)=l_{\rm loc(i)}(x), \forall j \ne {\rm loc(i)}\} \cap {\rm int}(O_i) = \emptyset
\end{equation}
 in which ${\rm int}(O_i)$ denotes the interior of $O_i$, the base regions satisfy $O_i\cap O_{i'}=\emptyset(i \ne i')$, $\bigcup^{I_{O'}}_{i=1}O_i=D$, $I_{O'}$ is the number of base regions. It has been shown in \cite{xu2021error} that the base region is a subset of the convex region.
\end{lemma}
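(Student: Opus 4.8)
\noindent\emph{Proof idea.} The plan is to build the representation (\ref{eq:lattice}) explicitly from the data of $f$ and then check the identity $f(x)=\max_i\min_j l_j(x)$ by proving the two inequalities separately. First I would list the finitely many distinct affine functions that occur as local functions of $f$ in the sense of (\ref{eq:loc}) and take these as the literals $l_1,\dots,l_M$. By \cite{xu2021error} every base region (in the sense of (\ref{eq:base})) is contained in a convex region, so the base regions are finitely many convex polyhedra; for each of them set $I_{\geqslant,i}=\{\,j:\ l_j(x)\ge l_{\mathrm{loc}(i)}(x)\ \text{for all }x\in O_i\,\}$ and let the $i$-th term be $t_i=\min_{j\in I_{\geqslant,i}}l_j$. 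Since $\mathrm{loc}(i)\in I_{\geqslant,i}$, each $t_i$ is a well-defined concave PWA function and the claim becomes $f=\max_i t_i$.

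The lower bound $\max_i t_i\ge f$ is immediate. Given $x$, choose a base region $O_k$ with $x\in\overline{O_k}$, so $f(x)=l_{\mathrm{loc}(k)}(x)$; every $j\in I_{\geqslant,k}$ has $l_j\ge l_{\mathrm{loc}(k)}$ on $O_k$, hence $l_j(x)\ge f(x)$ by continuity, and therefore $t_k(x)\ge f(x)$. Because $\mathrm{loc}(k)\in I_{\geqslant,k}$ one in fact gets $t_k(x)=f(x)$, so the maximum is attained there.

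The real work is the upper bound $\max_i t_i\le f$, i.e.\ $t_i(x)\le f(x)$ for every base region $O_i$ and every $x$, which amounts to exhibiting some literal $l_j$ with $j\in I_{\geqslant,i}$ and $l_j(x)\le f(x)$. Fix $O_i$ and $x$, pick an interior point $p\in\mathrm{int}(O_i)$, and follow the segment $\gamma(t)=(1-t)p+tx$, which stays in $D$ by convexity. Since the base regions are finitely many convex polyhedra, $\gamma$ meets them in a finite ordered list $O_i=O_{m_0},O_{m_1},\dots,O_{m_q}\ni x$, with $f\circ\gamma$ coinciding with the affine function $l_{\mathrm{loc}(m_s)}\circ\gamma$ on the $s$-th leg and with $l_{\mathrm{loc}(m_s)}=l_{\mathrm{loc}(m_{s+1})}$ at each crossing point by continuity of $f$. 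I would then carry, by induction on the legs, a witness literal $l_w$ with $w\in I_{\geqslant,i}$ and $l_w\le f$ up to the current crossing point: start with $w=\mathrm{loc}(i)$ on $O_{m_0}$; whenever the running witness rises above $f=l_{\mathrm{loc}(m_{s+1})}$ on entering $O_{m_{s+1}}$, replace it by $l_{\mathrm{loc}(m_{s+1})}$, the admissibility $\mathrm{loc}(m_{s+1})\in I_{\geqslant,i}$ and the bound $l_{\mathrm{loc}(m_{s+1})}\le f$ over the remaining legs being produced by the base-region non-crossing property (\ref{eq:base}) --- applied to $O_i$, to $O_{m_s}$ and to $O_{m_{s+1}}$ --- together with the elementary fact that an affine function restricted to a line cannot leave the graph of another one, move strictly above it, and then touch it again. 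At the last leg this gives $l_w(x)\le f(x)$, hence $t_i(x)\le f(x)$.

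I expect the inductive step to be the main obstacle: one must verify, crossing by crossing, that a below-$f$ literal drawn from $I_{\geqslant,i}$ can always be maintained, which needs a case distinction according to whether $f$ has a convex or a concave break at the facet being crossed and a short transitivity argument ensuring the replacement literal still dominates $l_{\mathrm{loc}(i)}$ on $O_i$. Everything else is routine: the convexity and finiteness of base regions is exactly what \cite{xu2021error} supplies, the segment bookkeeping is elementary, and the two inequalities together give (\ref{eq:lattice}). If the smallest possible terms are wanted one adds a final pass removing literals that are redundant inside the $\min$/$\max$, but that is not needed for the representation itself.
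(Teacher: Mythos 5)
The paper does not actually prove this lemma: it is imported verbatim from \cite{wilkinson1963method} (with the base-region fact from \cite{xu2021error}) and is only illustrated by the one-dimensional example of Fig.~\ref{fig:1dimPWA} and Table~\ref{tab:1dimlatticeterm}. Your proposal therefore supplies an argument where the paper supplies a citation, and the argument you sketch is essentially the classical one (Wilkinson's, in the form usually attributed to Tarela--Mart\'inez and Ovchinnikov): the lower bound $\max_i t_i\geqslant f$ from the definition of $I_{\geqslant,i}$ together with ${\rm loc}(k)\in I_{\geqslant,k}$, and the upper bound by transporting a witness literal along the segment from an interior point $p\in{\rm int}(O_i)$ to $x$. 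The structure is right and the proof goes through. The one step you should make explicit, because it is the heart of the matter, is why the replacement literal $l_{{\rm loc}(m_{s+1})}$ lands in $I_{\geqslant,i}$: if the current witness $l_w$ satisfies $l_w\leqslant l_{{\rm loc}(m_{s+1})}$ at the entry point of leg $s+1$ and $l_w> l_{{\rm loc}(m_{s+1})}$ at its exit, then (two affine functions on a line crossing at most once) the crossing lies inside the leg, hence at parameter $t=0$ one has $l_{{\rm loc}(m_{s+1})}(p)\geqslant l_w(p)\geqslant l_{{\rm loc}(i)}(p)$; property (\ref{eq:base}) applied to $O_i$ alone (you do not need it for $O_{m_s}$ or $O_{m_{s+1}}$) then upgrades this single-point inequality to $l_{{\rm loc}(m_{s+1})}\geqslant l_{{\rm loc}(i)}$ on all of $O_i$. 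No case distinction on convex versus concave breaks is required once this is written down.

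Two loose ends worth tightening. First, your phrase ``the bound $l_{{\rm loc}(m_{s+1})}\leqslant f$ over the remaining legs'' is not literally true --- the new witness is only guaranteed to equal $f$ on the current leg and may have to be replaced again later; your induction already accommodates this, so only the wording needs fixing. Second, the decomposition of the segment into finitely many legs on which $f\circ\gamma$ is affine presumes the segment is in general position with respect to the region boundaries; if it lies inside a facet one should perturb $p$ within ${\rm int}(O_i)$ and conclude by continuity. With these repairs your sketch is a complete and correct proof of the representation, which is more than the paper itself offers.
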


To further explain the construction of the lattice PWA representation, we take a one-dimensional PWA function as shown in Fig. \ref{fig:1dimPWA} to illustrate Lemma \ref{lem:lattice}, the domain of $f(x)$ is divided into $O_1,\dots,O_6$ that satisfy (\ref{eq:base}).
\begin{figure}[htbp]
\centering
\includegraphics[width = 0.5\textwidth]{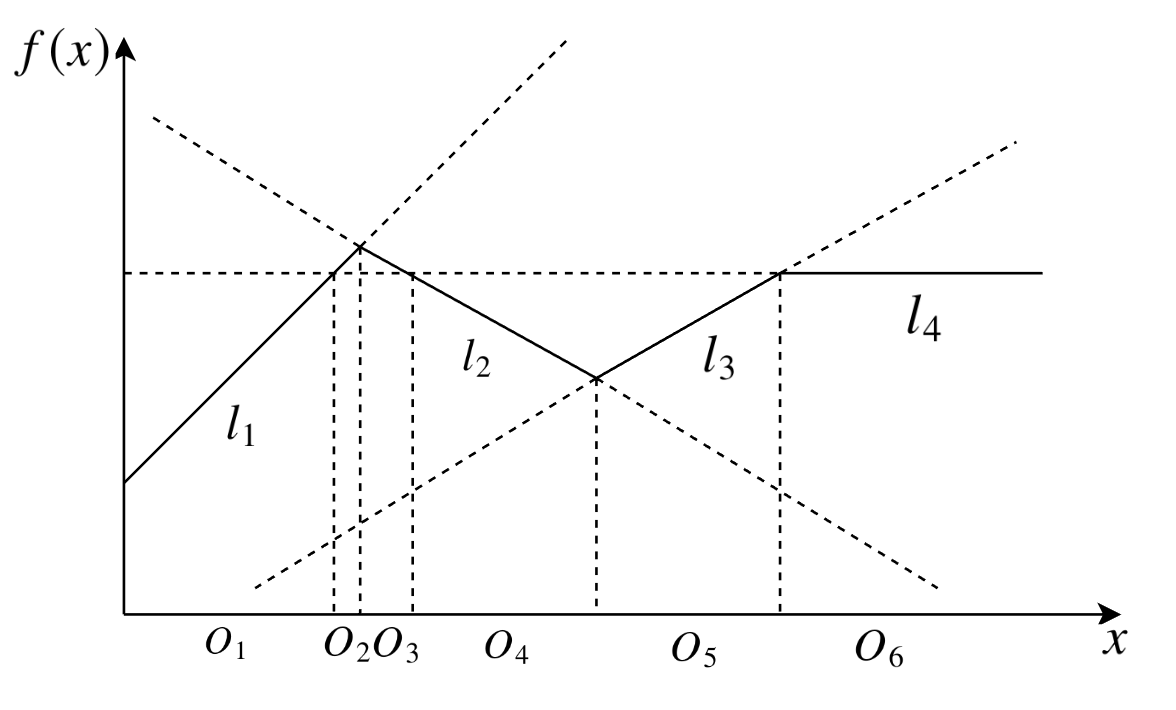}
\caption{One-dimensional PWA function}
\label{fig:1dimPWA}
\end{figure}

Take $O_1$ for example, as $l_2 \geqslant l_1$, $l_4 \geqslant l_1$, $\forall x\in O_1$, we have $I_{\geqslant,1}=\{1,2,4\}$, then the literals in all terms of lattice PWA function corresponding to the PWA function $f(x)$ in Fig. \ref{fig:1dimPWA} can the obtained in the same way, which is shown in Table \ref{tab:1dimlatticeterm}. 
\begin{table}[hbt!]
\caption{\label{tab:1dimlatticeterm} One-dimensional lattice PWA}
\centering\small
\begin{tabular}{ccc}
\hline
Base region & $l_{loc(i)}$  & Term \\\hline
$O_1$ & $l_1$ &  $\mathrm{Term}_1=\min\{l_1,l_2,l_4\}$\\
$O_2$ & $l_1$ &  $\mathrm{Term}_2=\min\{l_1,l_2\}$\\
$O_3$ & $l_2$ &  $\mathrm{Term}_3=\min\{l_1,l_2\}$\\
$O_4$ & $l_2$ &  $\mathrm{Term}_4=\min\{l_1,l_2,l_4\}$\\
$O_5$ & $l_3$ &  $\mathrm{Term}_5=\min\{l_1,l_3,l_4\}$\\
$O_6$ & $l_4$ &  $\mathrm{Term}_6=\min\{l_1,l_3,l_4\}$\\
\hline
\end{tabular}
\end{table}

According to (\ref{eq:lattice}), the lattice PWA function of $f(x)$ can be stated as
\begin{equation}\label{lattice}
f_{\rm lattice}(x)=\max\{\mathrm{Term}_1,\dots,\mathrm{Term}_6\},
\end{equation}
after removing redundant literals and terms \cite{xu2016irredundant}, the irredundant lattice PWA representation is 
\begin{equation}\label{eq:irrelattice}
    f_{\rm lattice}(x)=\max\{\min\{l_1,l_2\},\min\{l_1,l_3,l_4\}\}.
\end{equation}
readers can verify the equivalence of (\ref{eq:irrelattice}) and function depicted in Fig. \ref{fig:1dimPWA}.

\subsection{Lattice PWA approximation of explicit control law}

According to Section \ref{chap:3.2}, we can use a lattice PWA function to represent any continuous PWA function. As the analytical solution of explicit MPC is continuous PWA, a lattice PWA function can be used to represent the solution of explicit MPC. However, as the system dimension increases, the optimization problem becomes more complex, and the number of regions in the explicit solution will increase exponentially, making it extremely difficult to establish a lattice PWA representation \cite{xu2016irredundant}, therefore, this section uses a lattice PWA function to approximate the explicit control law. Unlike representations, approximations do not need information about every region, but focus on regions of interest, and are constructed through sample points in these regions. 

The following simple example illustrates the process of constructing a lattice PWA approximation.

$Example\ \mathit{1}$: Suppose we have a two-dimensional linear system as:
\begin{equation}\label{eq:exsys}
x(k+1)=\begin{bmatrix}
0.7&-0.1\\
0.2&1
\end{bmatrix}x(k)
+\begin{bmatrix}
0.1\\
0.01
\end{bmatrix}u(k)
\end{equation}
with constraints $x_{\min}=[-2,-2]^T$, $x_{\max}=[2,2]^T$, $u_{\min}=-2$, $u_{\max}=2$, the state weight is $Q=[2,0; 0,2]$, the input weight is $R=0.01$.

Using the MPT Toolbox \cite{Herceg_Kvasnica_Jones_Morari_2013}, we could get the optimal control law and visualize it in Fig. \ref{fig:mpc3law},
\begin{figure}
\centering
\subfigure[Explicit control law]{\includegraphics[width = 0.4\textwidth]{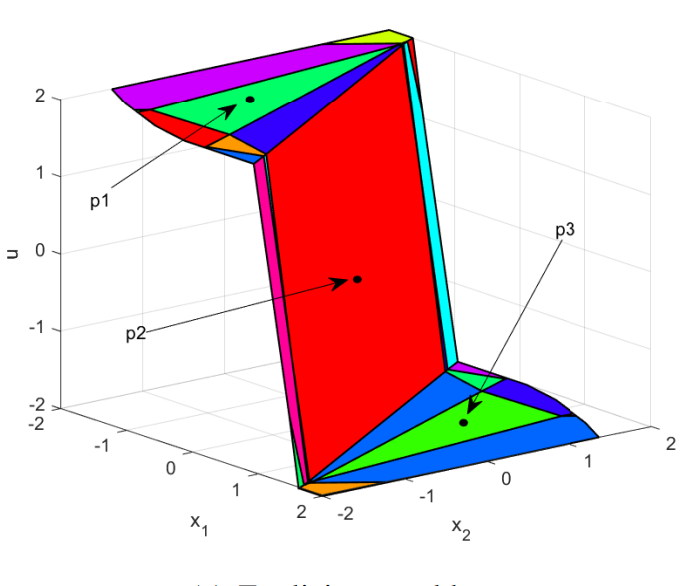}\label{fig:mpc3law}}\hspace{1cm}
\subfigure[Lattice PWA approximation]{\includegraphics[width = 0.4\textwidth]{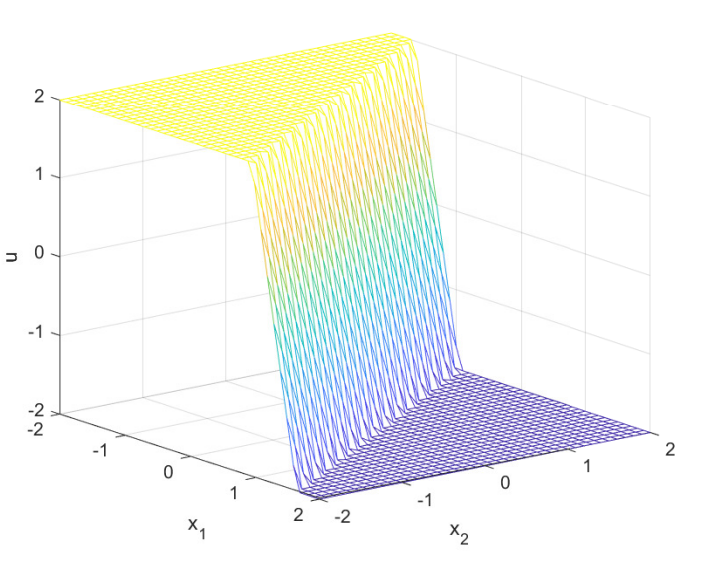}\label{fig:mpc3lattice}}
\caption{Control law}
\label{fig:mpc3controllaw}
\end{figure}
where each color plane represents the affine function corresponding to a critical region in the explicit control law. For system (\ref{eq:exsys}), the sample points $p_1=[-1.5,0.5]^T$, $p_2=[0,0]^T$, $p_3=[1,0.5]^T$ are selected in the state feasible region, for each point, we get the corresponding affine functions through the KKT conditions, which are listed below:
\[l_1=[0,0]^T\cdot x+2 ,\ l_2= [-5.72,-3.73]^T\cdot x,\ l_3=[0,0]^T\cdot x-2\]
and for each point $p_i, i=1, 2, 3$, the order of the affine functions $l_1, l_2, l_3$ is calculated,
\begin{eqnarray}\notag
    & &p_1:l_3\leqslant l_1\leqslant l_2 ,\\ 
    & &p_2:l_3\leqslant l_2\leqslant l_1,\\ 
    & &p_3:l_2\leqslant l_3\leqslant l_1,
\end{eqnarray}
the term corresponding to each sample point is then obtained according to Lemma \ref{lem:lattice},
\begin{eqnarray}\notag
    & &\mathrm{Term}_1=\min\{l_1,l_2\},\\ 
    & &\mathrm{Term}_2=\min\{l_1,l_2\},\\ 
    & &\mathrm{Term}_3=\min\{l_1,l_3\},
\end{eqnarray}
the lattice PWA approximation of the function depicted in Fig. \ref{fig:mpc3law} can be expressed as 
\begin{equation}
    \hat{f}_{\rm lattice} = \max\{\min\{l_1,l_2\},\min\{l_1,l_2\}\min\{l_1,l_3\} \}
\end{equation}
Fig. \ref{fig:mpc3lattice} shows the plot of the lattice PWA approximation, which is very close to the function in Fig. \ref{fig:mpc3law}.

By choosing the initial state as $x(0) = [1,0]^T$, the prediction horizon $N=5$, the simulation results of the system (\ref{eq:exsys}) is shown in Fig. \ref{fig:mpc3simx} and Fig. \ref{fig:mpc3simu}, where lattice PWA approximation achieved almost the same performance as linear online MPC.
\begin{figure}
\centering
\subfigure[State curve]{\includegraphics[width = 0.4\textwidth]{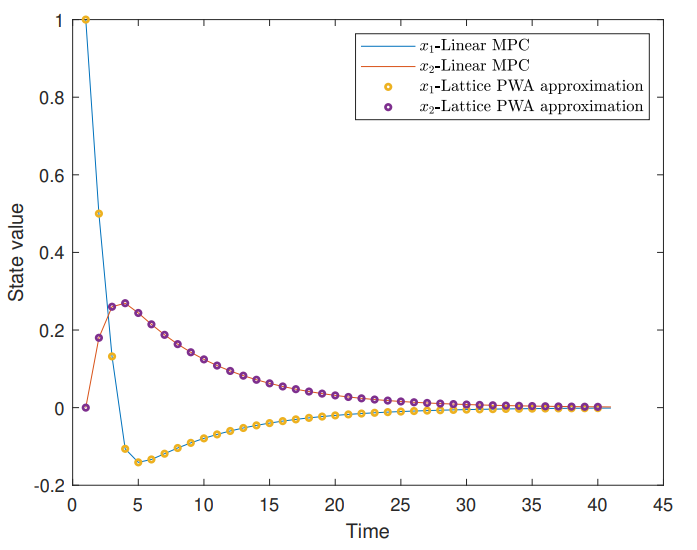}\label{fig:mpc3simx}}\hspace{1cm}
\subfigure[Input curve]{\includegraphics[width = 0.4\textwidth]{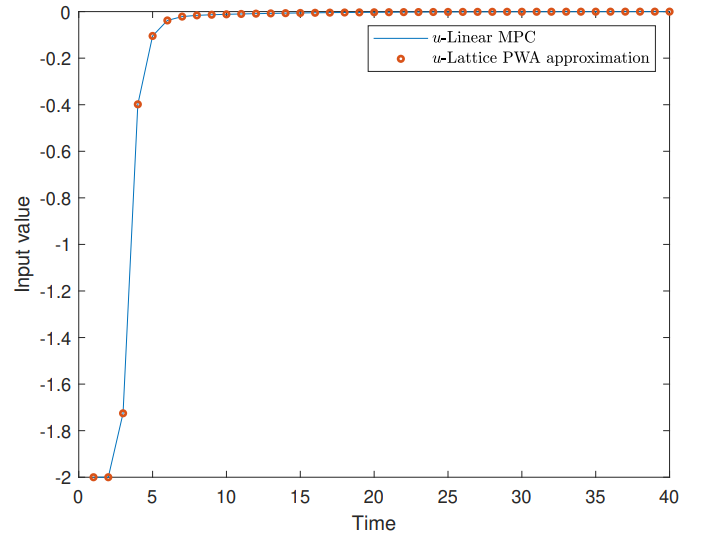}\label{fig:mpc3simu}}
\caption{Simulation curves}
\label{fig:mpc3sim}
\end{figure}

Therefore, the idea of the lattice PWA approximation is to collect the control expressions of those regions that dominate and ignore those unimportant regions, then link these hyperplanes up to obtain the approximated lattice PWA function. And sample points are randomly generated to obtain the corresponding control law until the expected PWA representation can be formulated.

Assumed that the sample data $\mathcal{X}\times\mathcal{U}$ is already collected, $\mathcal{X}$ contains the sample state points $x_i$, $\mathcal{U}$ contains the corresponding control function $u_i(x)$ that is obtained from the KKT conditions of (\ref{youhua2}) and the number of sample points is $N_s$. For each sample point, define the index set $J_{\geq, i}$ as
\begin{equation}
J_{\geq,i}=\{j|u_j(x_i) \geq u_i(x_i)\}.
\end{equation}
It is noted that the set $J_{\geq,i}$ is different from $I_{\geq,i}$ in that $I_{\geq,i}$ holds for all the points in the base region $O_i$. The lattice PWA approximation can be constructed as
\begin{equation}\label{eq:applattice}
  \hat{f}_{\rm lattice}=\max\limits_{i=1,\dots,N_s} \{\min_{j \in J_{\geq,i}}\{u_j\}\}. 
\end{equation}
and the process of constructing the lattice PWA approximation is shown in Algorithm \ref{alg:connect}.

\begin{algorithm}
\caption{Constructing lattice PWA approximation from sample data}
\label{alg:connect}
\begin{algorithmic}[1] 
\REQUIRE Sample data set $\mathcal{X}\times\mathcal{U}$.
\ENSURE Lattice PWA approximation $\hat{f}_{lattice}$.
\FOR{$k=1$ to $m$}
    \FOR{$i=1$ to $N_s$}
        \STATE $x_i\in\mathcal{X}$, $u_i\in\mathcal{U}$
        \STATE $l_{loc(i)}=u_i$;
        \FOR{$j=1$ to $N_s$}
            \STATE $x_j\in\mathcal{X}$, $u_j\in\mathcal{U}$
            \STATE $l_j=u_j$;
            \IF{$l_j(x_i) \geqslant l_{loc(i)}(x_i)$}
                \STATE $\mathrm{Term}_i = [\mathrm{Term}_i, l_j]$;
            \ENDIF
        \ENDFOR
    \ENDFOR
    \STATE $\hat{f}_{\rm lattice}^k = \max\{\min\{\mathrm{Term}_1\},\dots,\min\{\mathrm{Term}_{Ns}\}\}$;
\ENDFOR
\STATE $\hat{f}_{\rm lattice} = [\hat{f}_{\rm lattice}^1,\dots,\hat{f}_{\rm lattice}^m]$;
\end{algorithmic}
\end{algorithm}
Given the following assumption, the equivalence of lattice PWA approximation and the explicit control law is illustrated in \cite{xu2021error}.
\begin{assumption}\label{ass:latticeterm}
Assuming that all the distinct affine functions have been sampled in the domain of interest $\mathbb{X}$.
\end{assumption}
\begin{lemma}[\cite{xu2021error}]\label{lem:errlattice}
Supposing that 
\begin{equation}\notag
    \epsilon_d = \max\left\{ \max\{\min_{j\in J_{\geqslant,i}}\{ u_j(x)\}\} - \min\{\max_{j\in J_{\leqslant,i}}\{ u_j(x)\}\} \right\}
\end{equation}
where
\begin{equation}
J_{\leq,i}=\{j|u_j(x_i) \leq u_i(x_i)\}.
\end{equation}
If Assumption \ref{ass:latticeterm} holds, then we have 
\begin{equation}\notag
    |\max\{\min_{j\in J_{\geqslant,i}}\{ u_j(x)\}\}-u^*(x)|\leqslant \epsilon_d
\end{equation}
Furthermore, if $\epsilon_d=0$, we have
\begin{equation}
\hat{f}_{\rm lattice}(x)=u^*(x),\forall x\in \mathbb{X}
\end{equation}
\end{lemma}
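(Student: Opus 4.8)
The plan is to establish Lemma~\ref{lem:errlattice} in two stages: first prove the error bound $|\max\{\min_{j\in J_{\geqslant,i}}\{u_j(x)\}\}-u^*(x)|\leqslant\epsilon_d$ for every $x\in\mathbb{X}$, and then specialize to $\epsilon_d=0$ to conclude exact equality. Throughout I would fix an arbitrary test point $x\in\mathbb{X}$ and argue about the scalar quantities $\underline g(x):=\max_{i}\{\min_{j\in J_{\geqslant,i}}\{u_j(x)\}\}$ (the lattice approximation built from ``$\geqslant$'' terms) and its dual companion $\overline g(x):=\min_{i}\{\max_{j\in J_{\leqslant,i}}\{u_j(x)\}\}$ (built from ``$\leqslant$'' terms), whose gap is by definition bounded by $\epsilon_d$.

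First I would show the two-sided sandwich $\underline g(x)\leqslant u^*(x)\leqslant\overline g(x)$ does \emph{not} hold unconditionally, but that a weaker one-sided comparison holds at the sample points and propagates. The key structural fact to exploit is Lemma~\ref{lem:lattice}: the true control law $u^*$, being continuous PWA, admits an \emph{exact} lattice representation $u^*(x)=\max_i\{\min_{j\in I_{\geqslant,i}}\{l_j(x)\}\}$ over the base regions, and under Assumption~\ref{ass:latticeterm} every affine piece $l_j$ that appears there has been sampled, so it equals some $u_j$ in our collected family. Hence for the sample point $x_i$ lying in base region $O_{k(i)}$, one has $I_{\geqslant,k(i)}\subseteq J_{\geqslant,i}$ up to relabeling (the base-region index set is a sub-condition of the pointwise one, exactly as the paper notes after defining $J_{\geqslant,i}$), which lets me compare $\min_{j\in J_{\geqslant,i}}\{u_j\}$ with the true term $\min_{j\in I_{\geqslant,k(i)}}\{l_j\}$. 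Taking the max over $i$ and using that the base regions cover $\mathbb{X}$, I would deduce $\underline g(x)\leqslant u^*(x)$ for all $x$; the symmetric argument with $\min/\max$ and $J_{\leqslant,i}$ gives $u^*(x)\leqslant\overline g(x)$. Combining, $0\leqslant u^*(x)-\underline g(x)\leqslant\overline g(x)-\underline g(x)\leqslant\epsilon_d$, which is the claimed bound (in the scalar case; the vector case follows componentwise, consistent with the outer loop over $k=1,\dots,m$ in Algorithm~\ref{alg:connect}).

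For the second part, set $\epsilon_d=0$. Then the sandwich forces $\underline g(x)=u^*(x)=\overline g(x)$ for every $x\in\mathbb{X}$. Since $\hat f_{\rm lattice}$ is precisely $\underline g$ assembled componentwise (definition~\eqref{eq:applattice}), this gives $\hat f_{\rm lattice}(x)=u^*(x)$ on all of $\mathbb{X}$, as required. I would also remark that $\epsilon_d=0$ is automatic once Assumption~\ref{ass:latticeterm} holds together with the sampled set being rich enough that every base region contains a sample point, since then the ``$\geqslant$'' and ``$\leqslant$'' index sets at those points coincide with the true base-region sets and the two lattice expressions collapse to the exact representation of Lemma~\ref{lem:lattice}.

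The main obstacle I anticipate is the inclusion $I_{\geqslant,k(i)}\subseteq J_{\geqslant,i}$ and its use to pass from pointwise term comparisons to the global inequality $\underline g\leqslant u^*$. The subtlety is that $J_{\geqslant,i}$ is defined only by the ordering of affine values \emph{at the single point} $x_i$, whereas $I_{\geqslant,k(i)}$ encodes an ordering valid on the whole base region; one must be careful that every literal forced into the true term is indeed $\geqslant$ the local function at $x_i$ (immediate, since $x_i\in O_{k(i)}$) so that it is retained in $J_{\geqslant,i}$, while extra literals that enter $J_{\geqslant,i}$ only shrink the $\min$ and therefore cannot violate $\underline g\leqslant u^*$. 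Handling the degenerate situations — sample points on base-region boundaries, or affine functions that happen to coincide — and confirming that Assumption~\ref{ass:latticeterm} genuinely supplies the needed literals as members of the sampled family $\{u_j\}$, is where the proof must be written with care; the rest is the routine $\max$–$\min$ monotonicity bookkeeping.
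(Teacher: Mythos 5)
The paper does not actually prove this lemma --- it is imported verbatim from \cite{xu2021error} --- so there is no in-paper argument to compare against; judged on its own merits, your sandwich argument is the standard (and, as far as the cited source goes, essentially the intended) proof, and its core is sound: the inclusion $I_{\geqslant,k(i)}\subseteq J_{\geqslant,i}$ (valid because the base-region ordering condition implies the single-point one, and Assumption~\ref{ass:latticeterm} guarantees every literal of the exact representation appears among the sampled $u_j$), together with the fact that each exact term of Lemma~\ref{lem:lattice} is a \emph{global} underestimator of $u^*$, gives $\underline g\leqslant u^*$; the dual argument gives $u^*\leqslant\overline g$; and the bound and the $\epsilon_d=0$ case follow. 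Three points need tightening. First, the dual inequality $u^*\leqslant\overline g$ requires the min--max form $u^*(x)=\min_i\{\max_{j\in I_{\leqslant,i}}\{l_j(x)\}\}$, which the paper's Lemma~\ref{lem:lattice} does not state; you should either cite it separately or obtain it by applying Lemma~\ref{lem:lattice} to $-u^*$. Second, your proof actually \emph{establishes} the two-sided sandwich under Assumption~\ref{ass:latticeterm}, so the opening claim that the sandwich ``does not hold unconditionally'' is at best a distraction and at worst contradicts the step ``$0\leqslant u^*(x)-\underline g(x)\leqslant\overline g(x)-\underline g(x)$'' that you rely on; also, the covering of $\mathbb{X}$ by base regions is not needed for the direction $\underline g\leqslant u^*$ (only the global underestimator property of each exact term is). Third, note that once the sandwich holds, the displayed definition of $\epsilon_d$ as written (the max of $\underline g(x)-\overline g(x)$) is non-positive, which would make the lemma vacuous; you silently use $\epsilon_d=\max_x\{\overline g(x)-\underline g(x)\}$, which is clearly the intended quantity, but this reinterpretation of the statement should be made explicit rather than assumed.
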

According to Assumption \ref{ass:latticeterm} and Lemma \ref{lem:errlattice}, the error between lattice PWA approximation and the explicit control law is bounded.

\subsection{Stability analysis}
For the convenience of subsequent proof, we will adopt the following form of optimization problem
\begin{subequations}
\begin{eqnarray}
\min_U\hspace{-1.2em}&& J(x(k),U)=\sum^{N-1}_{i=0}l(x_{k+i|k},u_{k+i})+F(x_{k+N|k}),\label{mpc4:0}\\
\mathrm{s.t.}\hspace{-1.2em}
&&x_{k+i+1|k}=\hat{f}(x_{k+i|k},u_{k+i}),i \geqslant 0\label{mpc4:1}\\
&&(\ref{mpc1:2})-(\ref{mpc1:5}),\notag
\end{eqnarray}
\end{subequations}
where $l(x_{k+i|k},u_{k+i})$ and $F(x_{k+N|k})$ denote the quadratic function in (\ref{mpc1:0}), (\ref{mpc4:1}) is the linear discrete dynamics function (\ref{eq:Ldiscreteplant}), and denote $k_f(\cdot)$ as a local controller, the set of states that can be controlled by MPC with fixed horizon $N$ is $\mathbb{X}_N$, which satisfies
\begin{equation}\notag
    \mathbb{X}_N=\{x(0)\in\mathbb{X}|\exists u(k)\in\mathbb{U}, k= 1,\dots,N-1,\mathrm{such\ that}\ x(k)\in\mathbb{X},k=1,\dots,N-1,x(N) \in \mathbb{X}_f\}
\end{equation}

The stability of the original nonlinear system under Lattice PWA approximation is proven through the stability of the nominal system (\ref{mpc4:1}). Before the analysis, we define the $K_{\infty}$ functions.

\begin{definition}[\cite{ellis2014tutorial}]\label{def:kfunction}
A continuous function $\alpha : [0,\alpha )\mapsto[0,\infty )$ belongs to class $K$ if it is strictly increasing and satisfies $\alpha(0)=0$. A class $K$ function $\alpha$ is called a class $K_\infty$ function if $\alpha$ is unbounded.
\end{definition}

The nominal model of the system considered in this study can be expressed as
\begin{equation}\label{eq:nominalmodel}
x(k+1)=\hat{f}(x(k),u(k),0).
\end{equation}
where $\hat{f}$ is derived in (\ref{mpc4:1}), $x(k)$ is the system state vector at sample time $k$, and $u(k)$ is the input vector. The real system can be expressed as 
\begin{equation}\label{eq:realmodel}
z(k+1)=\hat{f}(x(k),u(k),\omega(k)).
\end{equation}
where $z(k)$ is the real state, $\omega(k)$ is the system disturbance\cite{huang2023economic}, we assume that $\omega(k) \in \mathbb{W}$.

Given Assumption \ref{ass:xf}, the asymptotical stablility of the nominal system (\ref{eq:nominalmodel}) under the MPC framework (\ref{mpc4:0}) is given.

\begin{assumption}\label{ass:xf}
$\mathbb{X}_f\subset\mathbb{X}$, $\mathbb{X}_f$ is closed, $\mathbf{0}\in\mathbb{X}_f$, $k_f(x)\in\mathbb{U}$ for all $x\in\mathbb{X}_f$, $\hat{f}(x,k_f(x))\in\mathbb{X}_f$ for all $x\in\mathbb{X}_f$, $J^*(\hat{f}(x,k_f(x))) - J^*(x)+l(x,k_f(x))\leqslant 0$ for all $x\in\mathbb{X}_f$.
\end{assumption}

\begin{lemma}[\cite{mayne2000survey}]\label{lem:mpc4sta}
If Assumption \ref{ass:xf} holds, then system (\ref{mpc4:1}) is asymptotically stable under MPC framework (\ref{mpc4:0}) as long as $x(k)\in\mathbb{X}_N$, and $J^*(x(k))$ can be regarded as a Lyapunov function of the closed-loop system.
\end{lemma}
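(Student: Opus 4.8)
The plan is to run the classical Lyapunov argument for MPC with terminal cost and terminal constraint, taking $J^*$ itself as the candidate Lyapunov function on the invariant set $\mathbb{X}_N$. I would organize the proof into three stages: (i) recursive feasibility, i.e. positive invariance of $\mathbb{X}_N$ under the closed loop; (ii) a one-step decrease $J^*(x(k+1)) - J^*(x(k)) \leqslant -l(x(k),k_0(x(k)))$ obtained from a shifted suboptimal input sequence; and (iii) $K_\infty$ upper and lower bounds on $J^*$ that upgrade this decrease to asymptotic stability of the origin via the standard discrete-time Lyapunov theorem.

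For stage (i), suppose $x(k)\in\mathbb{X}_N$ and let $U^*=\{u^*_k,\dots,u^*_{k+N-1}\}$ be an optimizer of (\ref{mpc4:0}) with predicted trajectory $\{x^*_{k|k},\dots,x^*_{k+N|k}\}$, so that by (\ref{mpc1:5}) the terminal state $x^*_{k+N|k}\in\mathbb{X}_f$. The closed-loop successor is $x(k+1)=\hat f(x(k),u^*_k)=x^*_{k+1|k}$. I propose the shifted candidate $\tilde U=\{u^*_{k+1},\dots,u^*_{k+N-1},\,k_f(x^*_{k+N|k})\}$ at $x(k+1)$. Its first $N-1$ moves reproduce the tail of the optimal trajectory, hence keep the predicted states in $\mathbb{X}$ and inputs in $\mathbb{U}$; the appended move is admissible and leaves the terminal state in $\mathbb{X}_f$ because Assumption \ref{ass:xf} gives $k_f(x^*_{k+N|k})\in\mathbb{U}$ and $\hat f(x^*_{k+N|k},k_f(x^*_{k+N|k}))\in\mathbb{X}_f\subset\mathbb{X}$. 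Thus $\tilde U$ is feasible for $x(k+1)$, so $x(k+1)\in\mathbb{X}_N$, establishing positive invariance.

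For stage (ii), I would compare $J(x(k+1),\tilde U)$ with $J^*(x(k))$: the two cost sums differ only by the dropped first stage cost $l(x(k),u^*_k)$, the removed terminal cost $F(x^*_{k+N|k})$, and the newly appended $l(x^*_{k+N|k},k_f(\cdot))+F(\hat f(x^*_{k+N|k},k_f(\cdot)))$. The terminal inequality in Assumption \ref{ass:xf} is exactly what makes the last three terms cancel up to a non-positive remainder, giving $J(x(k+1),\tilde U)\leqslant J^*(x(k))-l(x(k),u^*_k)$; optimality at $x(k+1)$ then yields $J^*(x(k+1))\leqslant J^*(x(k))-l(x(k),u^*_k)$. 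For stage (iii), I would lower-bound the stage cost by a $K_\infty$ function, $l(x,u)\geqslant\alpha_1(\|x\|)$, so that $J^*\geqslant\alpha_1(\|\cdot\|)$ and the increment is strictly negative away from the origin, and upper-bound $J^*$ near the origin by a $K_\infty$ function $\alpha_2$ using that on a neighborhood of $0$ inside $\mathbb{X}_f$ the pure $k_f$-rollout is feasible with cost quadratically small (or, alternatively, using continuity and the piecewise-quadratic structure of $J^*$ from Lemma \ref{lem:expmpc}). With $\alpha_1(\|x\|)\leqslant J^*(x)\leqslant\alpha_2(\|x\|)$ on $\mathbb{X}_N$ and the decrease above, $J^*$ is a Lyapunov function and the origin is asymptotically stable with region of attraction $\mathbb{X}_N$, which is the claim.

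\textbf{Main obstacle.} The construction is routine; the two delicate points are (a) checking that Assumption \ref{ass:xf}'s terminal inequality is precisely the correct ``drop-one, append-one'' bookkeeping for the objective (\ref{mpc4:0}), in particular reconciling the $J^*$ that appears in the assumption with the terminal cost $F$ in the cost function, and (b) obtaining the $K_\infty$ lower bound $l(x,u)\geqslant\alpha_1(\|x\|)$ when only $Q\succcurlyeq 0$ is assumed, which in general requires an additional detectability-type hypothesis so that $J^*$ is positive definite rather than merely positive semidefinite. I would flag these as the steps needing care; the remainder is feasibility bookkeeping and an invocation of the standard Lyapunov stability theorem (cf. \cite{mayne2000survey}).
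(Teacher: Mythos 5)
The paper offers no proof of this lemma at all --- it is imported verbatim from \cite{mayne2000survey} --- so there is nothing to compare against except the classical argument in that reference, and your proposal reproduces it faithfully: shifted tail sequence for recursive feasibility, drop-one/append-one bookkeeping for the descent inequality $J^*(x(k+1))\leqslant J^*(x(k))-l(x(k),u_k^*)$, and $K_\infty$ sandwich bounds to conclude. The outline is correct.

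The two obstacles you flag are worth dwelling on, because they are defects of the paper's hypotheses rather than of your argument. First, Assumption \ref{ass:xf} as printed imposes $J^*(\hat f(x,k_f(x)))-J^*(x)+l(x,k_f(x))\leqslant 0$ on $\mathbb{X}_f$, whereas the cancellation in your stage (ii) needs the condition on the terminal cost $F$, i.e. $F(\hat f(x,k_f(x)))-F(x)+l(x,k_f(x))\leqslant 0$; with $J^*$ in place of $F$ the appended terms do not telescope against the removed terminal cost, and the condition as written is instead a decrease statement for the wrong controller ($k_f$ rather than $k_0$). Treating this as a typo for $F$ --- which is how \cite{mayne2000survey} states axiom A4 --- is the only reading under which the lemma goes through, and you are right to insist on it. Second, with only $Q\succcurlyeq 0$ the stage cost need not dominate a $K_\infty$ function of $\|x\|$, so positive definiteness of $J^*$ (hence asymptotic, not merely Lyapunov, stability) requires an additional observability/detectability hypothesis on $(A,Q^{1/2})$ that the paper never states; in the simulation $Q$ has a $10^{-7}$ entry precisely to keep it nonsingular in practice. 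Neither gap is yours to close, but a complete proof would have to either strengthen the assumptions or record them explicitly.
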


\begin{lemma}\label{lem:errlattice2}
If Assumption \ref{ass:latticeterm}-\ref{ass:xf} hold, then the lattice PWA approximation $\hat{f}_{\rm lattice}(x)$ asymptotically stabilizes the system (\ref{eq:nominalmodel}).
\end{lemma}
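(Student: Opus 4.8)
The plan is to reduce the claim to two facts already in hand: the exactness of the lattice PWA construction under full sampling (Lemma~\ref{lem:errlattice}) and the Lyapunov stability of the \emph{exact} explicit MPC law (Lemma~\ref{lem:mpc4sta}). So the strategy is: (i) show $\hat{f}_{\rm lattice}$ coincides with the explicit optimal control law on the feasible region, and (ii) invoke Lemma~\ref{lem:mpc4sta} for that law.

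For step (i), I would argue that Assumption~\ref{ass:latticeterm} forces $\epsilon_d=0$ in Lemma~\ref{lem:errlattice}. By Lemma~\ref{lem:expmpc} the optimizer $U^*(x)$, and hence the MPC law $u^*(x)=k_0(x)$, is continuous PWA with finitely many affine pieces $f_ix+g_i$ on the critical regions $CR_i$; these finitely many distinct affine functions are exactly the ones Assumption~\ref{ass:latticeterm} requires to be sampled. Once all of them have been collected as literals, for every sample point $x_i$ the point-indexed set $J_{\geqslant,i}$ contains the region-indexed set $I_{\geqslant,\mathrm{loc}(i)}$ of the canonical representation of Lemma~\ref{lem:lattice}, so the upper envelope $\max_i\min_{j\in J_{\geqslant,i}}u_j$ and the lower envelope $\min_i\max_{j\in J_{\leqslant,i}}u_j$ both squeeze $u^*$; thus $\epsilon_d=0$ and Lemma~\ref{lem:errlattice} yields $\hat{f}_{\rm lattice}(x)=u^*(x)$ for all $x\in\mathbb{X}_N$.

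For step (ii), with $\hat{f}_{\rm lattice}=k_0$ on $\mathbb{X}_N$ the closed loop $x(k+1)=\hat{f}(x(k),\hat{f}_{\rm lattice}(x(k)),0)$ is literally the MPC closed loop (\ref{mpc4:1}). Assumption~\ref{ass:xf} is exactly the hypothesis of Lemma~\ref{lem:mpc4sta}, so the nominal system (\ref{eq:nominalmodel}) is asymptotically stable and $J^*$ is a Lyapunov function for it: by Lemma~\ref{lem:expmpc} $J^*$ is continuous, convex, piecewise quadratic, positive definite with $J^*(\mathbf{0})=0$, hence sandwiched by $K_\infty$ functions on $\mathbb{X}_N$, and it decreases along trajectories by the terminal inequality in Assumption~\ref{ass:xf}. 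This gives the conclusion.

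The main obstacle is the first step: upgrading the \emph{bounded-error} statement of Lemma~\ref{lem:errlattice} to an \emph{exact} identity, i.e.\ checking carefully that the point-wise index sets $J_{\geqslant,i},J_{\leqslant,i}$ obtained from the samples, although a priori coarser than the base-region index sets of Lemma~\ref{lem:lattice}, still reproduce the same max--min expression once no new literal can be added. If one is only willing to retain $\epsilon_d>0$, the argument must instead be a robustness/ISS estimate: $\hat{f}_{\rm lattice}(x)-u^*(x)$ is bounded by $\epsilon_d$ and vanishes near the origin (where $u^*$ is a single affine piece that is certainly sampled), so $J^*$ remains an ISS-type Lyapunov function, but one must then additionally verify that feasibility is preserved under the perturbed input — in particular that the state stays in $\mathbb{X}_N$ and the terminal set $\mathbb{X}_f$ remains reachable. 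I would present the exact version, since Assumption~\ref{ass:latticeterm} is precisely what is intended to secure it.
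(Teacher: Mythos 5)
Your proposal follows essentially the same route as the paper: Assumption~\ref{ass:latticeterm} together with Lemma~\ref{lem:errlattice} is used to conclude $\hat{f}_{\rm lattice}=u^*$, and Lemma~\ref{lem:mpc4sta} under Assumption~\ref{ass:xf} then delivers asymptotic stability of the nominal closed loop. If anything you are more careful than the paper's own two-line proof, which simply asserts the equivalence ``when $\epsilon_d=0$'' without arguing that the sampling assumption actually forces $\epsilon_d=0$ --- precisely the step you correctly single out as the real obstacle.
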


\begin{proof}
According to Lemma \ref{lem:errlattice2}, if Assumption \ref{ass:latticeterm} holds, then the lattice PWA approximation is equivalent to the MPC control law when $\epsilon_d = 0$, i.e. 
\begin{equation}
 \hat{f}_{\rm lattice}(x)= u^*(x),\forall x\in \mathbb{X} 
\end{equation}
According to Lemma \ref{lem:mpc4sta}, if Assumption \ref{ass:xf} holds, then $\hat{f}_{\rm lattice}(x) = u^*(x)$ asymptotically stabilizes the system (\ref{eq:nominalmodel}).
\end{proof}

Therefore, it can be concluded that lattice PWA approximation $\hat{f}_{\rm lattice}(x)$ is equivalent to the optimal solution of linear online MPC problem and ensures the stability of the nominal system (\ref{eq:nominalmodel}). Under Assumption \ref{ass:flip}, the asymptotical stability of the real system (\ref{eq:realmodel}) is given.

\begin{assumption}\label{ass:flip}
Let $\varphi = [x^T,u^T]^T$, and the system function $f(\varphi)$ is assumed to be Lipschitz continuous, i.e., there is a constant $L_1$, such that $||f(\varphi_i)-f(\varphi_j)||\leqslant L_1||\varphi_i-\varphi_j||$, for all $\varphi_i$, $\varphi_j$ in the state-input domain.
\end{assumption}

Under Assumption \ref{ass:flip}, the error bound between the nonlinear and linearized function can be derived, as indicated in Lemma \ref{lem:errorbound}.

\begin{lemma}\label{lem:errorbound}
Given that Assumptions \ref{ass:flip} holds, and denote linear discrete function as $\hat{f}(\varphi)=[A, B]\varphi + C$. Suppose that for any $\varphi \in D$, where $D$ is the domain of $f$, there exists a point $\varphi_s$ such that $||\varphi -\varphi_s||\leqslant \sigma$. In this case, the deviation between the nonlinear function $f(\varphi)$ and linearized function $\hat{f}(\varphi)$ satisfies the following inequality:
\begin{equation}\label{eq:errorb}
||f(\varphi)-\hat{f}(\varphi)||\leqslant L\sigma,
\end{equation}
where $L$ is a constant determined by $f$ and $\hat{f}$.
\end{lemma}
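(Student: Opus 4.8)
The plan is to read $\varphi_s$ as the linearization point $[x_s^T,u_s^T]^T$ --- so the hypothesis $||\varphi-\varphi_s||\le\sigma$ for every $\varphi\in D$ just says $D$ lies in a ball of radius $\sigma$ about it --- and to exploit that $\hat f$ coincides with $f$ there (this is exactly what the constant term $C$ in (\ref{eq:Ldiscreteplant}) enforces, mirroring $C_c$ in (\ref{eq:linearplant})), while both $f$ and $\hat f$ are Lipschitz. Then $f-\hat f$ vanishes at $\varphi_s$ and grows at most linearly away from it, and the $\sigma$-bound on $D$ turns this into the uniform estimate (\ref{eq:errorb}).

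First I would split the error through $\varphi_s$ with the triangle inequality,
\begin{equation}\notag
||f(\varphi)-\hat f(\varphi)|| \le ||f(\varphi_s)-\hat f(\varphi_s)|| + ||f(\varphi)-f(\varphi_s)|| + ||\hat f(\varphi_s)-\hat f(\varphi)||,
\end{equation}
and note the first term is zero. For the second, Assumption \ref{ass:flip} gives $||f(\varphi)-f(\varphi_s)||\le L_1||\varphi-\varphi_s||\le L_1\sigma$. For the third, $\hat f(\varphi)-\hat f(\varphi_s)=[A,B](\varphi-\varphi_s)$, so $||\hat f(\varphi)-\hat f(\varphi_s)||\le ||[A,B]||\,||\varphi-\varphi_s||\le ||[A,B]||\,\sigma$, the induced matrix norm $||[A,B]||$ being finite because $[A,B]$ is a fixed matrix. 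Summing the three bounds yields (\ref{eq:errorb}) with $L=L_1+||[A,B]||$, a constant depending only on $f$ (through $L_1$) and on the linearized model (through $A$ and $B$), as claimed.

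The one place that needs care is the equality $f(\varphi_s)=\hat f(\varphi_s)$: once the continuous-to-discrete conversion is accounted for, the matched constant may differ from the exact first-order Taylor value by a fixed quantity $c_0:=||f(\varphi_s)-\hat f(\varphi_s)||$. Since $D$ is bounded and $\sigma>0$, this can be absorbed by replacing $L$ with $L_1+||[A,B]||+c_0/\sigma$, or removed outright by folding $c_0$ into $C$ --- which is precisely the role of that term. This bookkeeping on the constant term is essentially the only obstacle; the two Lipschitz estimates are immediate, so the proof is short. What matters downstream is the linear-in-$\sigma$ form of (\ref{eq:errorb}), which permits the linearization error to be treated as a bounded disturbance $\omega(k)\in\mathbb{W}$ in (\ref{eq:realmodel}) when establishing stability of the real system.
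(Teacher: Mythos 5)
Your proof is correct and follows essentially the same route as the paper: decompose the error through the linearization point $\varphi_s$ and apply the Lipschitz bounds for $f$ and $\hat f$ to obtain $L=L_1+L_2$ (your $L_2$ being the explicit $||[A,B]||$). You are in fact slightly more careful than the paper, whose equality step $f(\varphi)-\hat f(\varphi)=f(\varphi)-f(\varphi_s)+\hat f(\varphi_s)-\hat f(\varphi)$ silently assumes the cancellation $f(\varphi_s)=\hat f(\varphi_s)$ that you flag and justify explicitly.
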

\begin{proof}
Obviously, affine function $\hat{f}(\varphi)$ is Lipschitz, according to Assumption \ref{ass:flip}, we have
\begin{equation}\notag
||\hat{f}(\varphi_i)-\hat{f}(\varphi_j)||\leqslant L_2||\varphi_i-\varphi_j||
\end{equation}
where $\varphi_i$, $\varphi_j$ are arbitrary points in the state-input domain, then
\begin{align}
&||f(\varphi)-\hat{f}(\varphi)||\notag\\
&= ||f(\varphi)-f(\varphi_s)+\hat{f}(\varphi_s)-\hat{f}(\varphi)||\notag\\
&\leqslant L_1||\varphi - \varphi_s|| + L_2 ||\varphi - \varphi_s||\notag\\
&= (L_1+L_2)||\varphi - \varphi_s||.
\end{align}
where $\varphi_s = [x_s^T, u_s^T]^T$ represents the linearization point. As $||\varphi-\varphi_s||\leqslant\sigma$, letting $L = L_1 + L_2$, and we have (\ref{eq:errorb}) proved.
\end{proof}

To facilitate the stability analysis of the original system, we define the level set of the Lyapunov function $J^*(\cdot)$ 
\begin{equation}
\Omega_{\rho}=\{x\in\mathbb{X}:J^*(x)\leqslant\rho\}.
\end{equation}

The following theorem demonstrates the stability of the system (\ref{eq:realmodel}).
\begin{theorem}
Consider the system (\ref{eq:realmodel}), if Assumption \ref{ass:xf}-\ref{ass:flip} hold, the magnitude of the partial derivative $\frac{\partial J^*(x)}{\partial x}$ is upper bounded such that $||\frac{\partial J^*(x)}{\partial x}||\leqslant K_V$ for all $x \in \mathbb{X}$. Suppose there exist $\epsilon_s > 0 $ and $\rho_s > 0$ such that 
\begin{equation}\label{realstabl}
-\alpha(\rho_s)+K_VL\delta+ML^2\delta^2\leqslant-\epsilon_s
\end{equation}
where $\alpha(\cdot)$ is a class $K_{\infty}$ function associated with Definition \ref{def:kfunction}, and $M$ is the constant associated with the Taylor expansion of $J^*(x)$. Let
\begin{equation}   \notag
\Omega_{\rho_{\min}}\subset\mathbb{X}_e\subset\Omega_{\rho_{\max}}\subset\mathbb{X}_N\subset\mathbb{X},
\end{equation}
where $\mathbb{X}_e$ is the robust control invariant set \cite{blanchini1999set}, i.e., $\forall x\in\mathbb{X}_e$, $\omega\in\mathbb{W}$, $\exists u\in \mathbb{U}$, such that $\hat{f}(x,u,\omega) \in\mathbb{X}_e$, and $\Omega_{\rho_{\min}}$ is the level set defined as
\begin{equation}\notag
\Omega_{\rho_{\min}} := \{x\in\mathbb{X}|J^*(x)\leqslant\max\{J^*(x(k+1)):||x(k) -x_s||_2\leqslant\rho_s\}\}
\end{equation}
and $\Omega_{\rho_{\max}}$ is the maximum level set within $\mathbb{X}_N$.

In this case, the closed-loop system (\ref{eq:realmodel}) converges to the robust invariant set $\mathbb{X}_e$ in finite steps and is maintained in $\mathbb{X}_e$ for any initial condition $x(0)\in\Omega_{\rho_{\max}}$.
\end{theorem}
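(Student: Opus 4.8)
The plan is to use the nominal value function $J^*(\cdot)$ as a Lyapunov function for the \emph{perturbed} closed loop. By Lemma \ref{lem:errlattice2} the lattice PWA controller coincides with the optimal MPC law (under Assumption \ref{ass:latticeterm}), so the descent property of Lemma \ref{lem:mpc4sta} is available verbatim for the nominal dynamics. First I would record the nominal decrease: for $x(k)\in\mathbb{X}_N$, applying $u(k)=\hat{f}_{\rm lattice}(x(k))=u^*(x(k))$ and letting $x(k+1)=\hat{f}(x(k),u(k),0)$ denote the nominal successor, Assumption \ref{ass:xf} together with Lemma \ref{lem:mpc4sta} gives $J^*(x(k+1))-J^*(x(k))\le -l(x(k),u(k))\le -\alpha(\|x(k)\|)$ for the class-$K_\infty$ bound $\alpha$ appearing in the theorem. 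In particular, whenever $x(k)\notin\Omega_{\rho_{\min}}$ the stage cost is bounded away from zero and the nominal decrease is at least $\alpha(\rho_s)$, by the very definition of $\Omega_{\rho_{\min}}$ as the $J^*$-level set dominating the image of the $\rho_s$-ball about $x_s=\mathbf{0}$.

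Next I would compare the real successor $z(k+1)=\hat{f}(x(k),u(k),\omega(k))$ with the nominal one. The mismatch is due to the disturbance $\omega(k)\in\mathbb{W}$ and the linearization error bounded by Lemma \ref{lem:errorbound}; I would collect both into a single bound $\|z(k+1)-x(k+1)\|\le L\delta$, where $\delta$ plays the role of $\sigma$ in Lemma \ref{lem:errorbound} inflated by the disturbance radius. A second-order Taylor expansion of $J^*$ about $x(k+1)$ then gives
\[
J^*(z(k+1))\le J^*(x(k+1))+\Big\|\tfrac{\partial J^*}{\partial x}\Big\|\,\|z(k+1)-x(k+1)\|+M\|z(k+1)-x(k+1)\|^2,
\]
and using $\|\partial J^*/\partial x\|\le K_V$ together with the mismatch bound this is at most $J^*(x(k+1))+K_VL\delta+ML^2\delta^2$.

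Chaining the two estimates, for every $x(k)\in\Omega_{\rho_{\max}}\setminus\Omega_{\rho_{\min}}$ I obtain
\[
J^*(z(k+1))-J^*(x(k))\le -\alpha(\rho_s)+K_VL\delta+ML^2\delta^2\le-\epsilon_s<0
\]
by hypothesis (\ref{realstabl}). Hence along the perturbed trajectory $J^*$ strictly decreases by at least $\epsilon_s$ at every step at which the state lies outside $\Omega_{\rho_{\min}}$; since $J^*\ge 0$ is bounded below, after at most $\lceil J^*(x(0))/\epsilon_s\rceil$ steps the state must enter $\Omega_{\rho_{\min}}\subset\mathbb{X}_e$. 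Monotonicity of $J^*$ up to that instant also keeps the trajectory inside $\Omega_{\rho_{\max}}\subset\mathbb{X}_N\subset\mathbb{X}$, so the MPC problem stays feasible and $u^*$, hence $\hat{f}_{\rm lattice}$, is well defined throughout, the one-step overshoot produced by the disturbance being absorbed precisely by the definition of $\Omega_{\rho_{\min}}$. Finally, once $z(k)\in\mathbb{X}_e$, robust control invariance of $\mathbb{X}_e$ supplies an admissible input keeping $\hat{f}(x,u,\omega)\in\mathbb{X}_e$ for all $\omega\in\mathbb{W}$; since $\mathbb{X}_e\subset\Omega_{\rho_{\max}}\subset\mathbb{X}_N$ the lattice/MPC law is exactly the one realizing the required decrease, so the closed loop remains in $\mathbb{X}_e$ thereafter, which is the claim.

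The step I expect to be the main obstacle is making the mismatch bound $\|z(k+1)-x(k+1)\|\le L\delta$ rigorous and consistent with the constants in (\ref{realstabl}): Lemma \ref{lem:errorbound} only controls the static deviation $\|f(\varphi)-\hat{f}(\varphi)\|$, whereas here one also needs the contribution of $\omega(k)$ and the effect of propagating one discrete step, so the identification of $\delta$ and the verification that "nominal decrease at least $\alpha(\rho_s)$ outside $\Omega_{\rho_{\min}}$" is the correct threshold require care. A secondary subtlety is justifying a uniform second-order Taylor remainder $M\|\cdot\|^2$ for $J^*$, which by Lemma \ref{lem:expmpc} is only piecewise quadratic and not $C^2$ across region boundaries; one should argue from the piecewise-quadratic structure and compactness of $\mathbb{X}_N$ that such a uniform $M$ exists.
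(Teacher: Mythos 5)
Your proposal follows essentially the same route as the paper's own proof: bound the one-step mismatch $\|z(k+1)-x(k+1)\|\leqslant L\delta$ via Lemma \ref{lem:errorbound}, expand $J^*$ to second order about the nominal successor using the bounds $K_V$ and $M$, chain this with the nominal decrease $-\alpha(\|x(k)-x_s\|)$ to get $J^*(z(k+1))-J^*(z(k))\leqslant-\epsilon_s$ outside the $\rho_s$-ball, and conclude finite-step entry into $\Omega_{\rho_{\min}}\subset\mathbb{X}_e$ followed by robust invariance. The two caveats you flag (the identification of $\delta$ with $\sigma$ plus the disturbance contribution, and the uniform second-order remainder for a merely piecewise-quadratic $J^*$) are left equally implicit in the paper's argument, so your write-up is, if anything, more candid about the same gaps.
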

\begin{proof}
According to Lemma \ref{lem:errorbound}, if $x(k) = z(k)$, then
\begin{equation}\notag
||z(k+1)-x(k+1)||\leqslant L\sigma,
\end{equation}
Taking the Taylor expansion around $x(k+1)$ yields
\begin{equation}\notag
    J^*(z(k+1)) = J^*(x(k+1))+\frac{\partial J^*(x)}{\partial x}|_{x(k+1)}\cdot(z(k+1)-x(k+1))+O(||z(k+1)-x(k+1)||^2)
\end{equation}
According to Lemma \ref{lem:expmpc}, $J^*(x)$ is piecewise quadratic. For $x\in\mathbb{X}$, a positive constant can be found such that
\begin{equation}\notag
    O(||z(k+1)-x(k+1)||^2) \leqslant M||z(k+1)-x(k+1)||^2
\end{equation}
given that $x(k) = z(k)$, we have 
\begin{align}
    &J^*(z(k+1))-J^*(z(k))=J^*(z(k+1))-J^*(x(k))\notag\\
    &\leqslant J^*(x(k+1))-J^*(x(k)) + \frac{\partial J^*(x)}{\partial x}|_{x(k+1)}\cdot(z(k+1)-x(k+1))+M||z(k+1)-x(k+1)||^2\notag\\
    &\leqslant -\alpha(||x(k)-x_s||)+K_VL\delta+ML^2\delta^2\notag
\end{align}

If (\ref{realstabl}) holds, $\exists\rho_s$ such that for all $x(k)\in\Omega_{\rho_{\max}}$, we have $J^*(z(k+1))-J^*(z(k))\leqslant-\epsilon_s$, which implies that if $||x - x_s||_2\geqslant \rho_s$, the Lyapunov function keeps decreasing. Thus the system state will enter a region such that $||z - x_s||\leqslant\rho_s$ in finite steps.

Given that the definition of $\Omega_{\rho_{\min}}$, when the system state satisfies $||x - x_s||_2\leqslant \rho_s$, the state remains in $\Omega_{\rho_{\min}}$ at all times. Subsequently, the system state enters $\mathbb{X}_e$ in finite steps and remains within $\mathbb{X}_e$.
\end{proof}

\section{Simulation}\label{chap:4}

This section presents the closed-loop simulations that were performed with the complete nonlinear model (\ref{eq:z1}), and the parameters of (\ref{eq:z1}) are listed in Table \ref{tab:satpra}. The following results are conducted with the processor 5800X and the memory frequency 3600MHZ, with specific parameters of the optimization problem (\ref{mpc1:0}) detailed in Table \ref{zxjshs} for clarity.
\begin{table}[hbt!]
\caption{\label{tab:satpra} Satellite parameters}
\centering\small
\begin{tabular}{cc}
\hline
Parameters & Value  \\\hline
Satellite inertia matrix, $ I$ & $\mathrm{diag}(4.250,4.334,3.664)(\mathrm{kgm^2})$ \\
Axial wheel inertia, $I_w$ & $4\times 10^{-5}(\mathrm{kgm^2})$ \\ 
Axial wheel placement, $\Lambda$& $[0,1,0]^T$ \\ 
Thruster torque, $\tau_{\max}$ & $[0.0484, 0.0484, 0.0398]^T(\mathrm{Nm})$\\
Maximum wheel torque, $\tau_w$ & $0.0020(\mathrm{Nm})$\\
Maximum wheel velocity, $\omega_w$ &  $527(\mathrm{rad/s})$\\
\hline
\end{tabular}
\end{table}
\begin{table}[h]\small
\caption{Simulation parameters}
\label{zxjshs}
\vspace{0.5em}\centering
\begin{tabular}{ccc}
\hline
Parameters & Value   \\ \hline
state weight $\boldsymbol{Q}$ & $diag(500,500,500,1e-7,100,100,100)$  \\
input weight $\boldsymbol{R}$ & $diag(200,200,200,100)$  \\
control horizon $N_c$ & 24\\ 
sample period $T$ & 0.1($\mathrm{s}$) \\
$x_{\min}$ & $-[1,1,1,527,1,1,1]^T$  \\
$x_{\max}$ & $[1,1,1,527,1,1,1]^T$ \\
$u_{\min}$ & $-[0.0484,0.0484,0.0398,0.0020]^T$ \\
$u_{\max}$ & $[0.0484,0.0484,0.0398,0.0020]^T$ \\
initial $\boldsymbol{\omega}^b_{ob}$ & $[-0.05,0.15,-0.08]$($\mathrm{rad/s}$) \\
initial $\boldsymbol{\omega}_{w}$ & $300$($\mathrm{rad/s}$)  \\
initial $[\phi,\theta,\psi]$ & [-25,60,90]($\mathrm{deg}$) \\ 
expected $\boldsymbol{\omega}^b_{ob}$ & $[0,0,0]$($\mathrm{rad/s}$) \\
expected $\boldsymbol{\omega}_{w}$ & $0$($\mathrm{rad/s}$)  \\
expected $[\phi,\theta,\psi]$ & [0,0,0]($\mathrm{deg}$) \\
\hline
\end{tabular}
\end{table}

Linear MPC and lattice PWA approximation are used to perform the attitude control of the satellite and compared with the LQR method to highlight the superiority of the MPC controller for satisfying constraints. It is noted that traditional explicit MPC as listed in \cite{hegrenaes2005spacecraft} failed to provide the whole explicit control law.
\begin{figure}[htbp]
\centering
\includegraphics[width=0.45\textwidth]{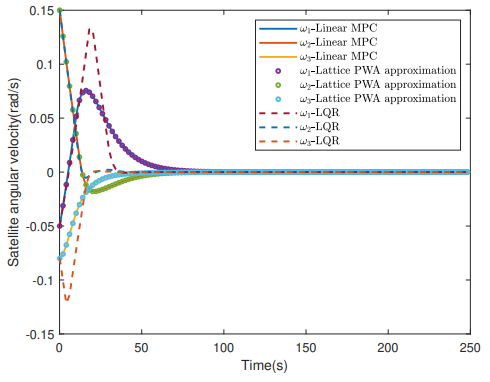}
\caption{Satellite angular velocity curve}
\label{wxjsd}
\end{figure}
\begin{figure}[htbp]
\centering
\includegraphics[width=0.45\textwidth]{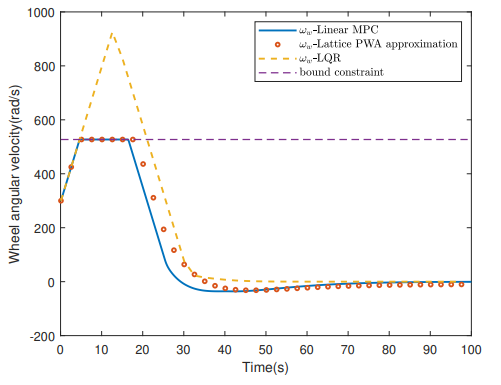}
\caption{Wheel angular velocity curve}
\label{f1}
\end{figure}
\begin{figure}[htbp]
\centering
\includegraphics[width=0.45\textwidth]{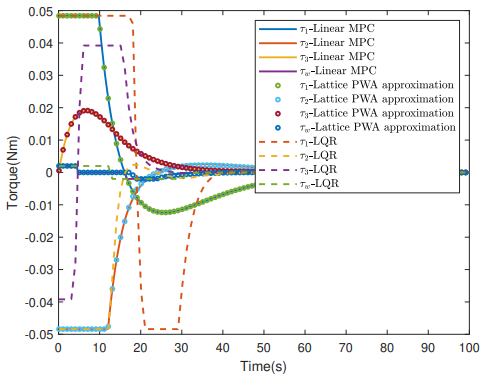}
\caption{Torque curve}
\label{lj}
\end{figure}
\begin{figure}[htbp]
\centering
\includegraphics[width=0.45\textwidth]{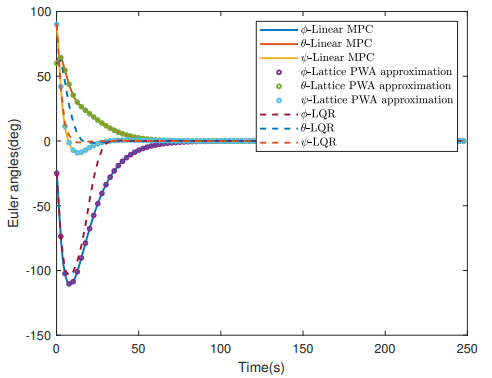}
\caption{Euler angle curve}
\label{olj}
\end{figure}

Fig. \ref{wxjsd}-\ref{olj} illustrates the simulation results of satellite attitude control, in which the method adopted in this paper has only a slight deviation from the results of linear MPC and does not affect the final stability of the system. The curves in Fig. \ref{wxjsd} are the angular velocity of the satellite coordinate system relative to the orbital coordinate system. It can be seen that all three controllers can achieve the ultimate elimination of relative rotation. Fig. \ref{f1} is the angular velocity curves of the flywheel, where the constraint on wheel angular velocity is not satisfied for the LQR method, while MPC and lattice PWA approximation can meet the constraints of wheel angular velocity. Fig. \ref{lj} is the external torque curves of the satellite, where all three controllers can meet the constraints on the input vector. Fig. \ref{olj} is the Euler angle curve, where the attitude of the satellite has reached the expected attitude under all controllers.

Table \ref{tab:simu_result} lists the online calculation time of each controller and the corresponding fuel consumption that is measured by the magnitude of the total impulse acting on the satellite. It can be seen that the online calculation time of the lattice PWA approximation is significantly shorter than that of the linear MPC. Besides, linear MPC and lattice PWA approximation can satisfy the state constraints of the satellite, while in the LQR method, the state constraints are not satisfied. Due to the advantages of rolling optimization, linear MPC and lattice PWA approximation consume less fuel than the LQR method.

\begin{table}[h]
\caption{Online calculation}
\label{tab:simu_result}
\vspace{0.5em}\centering
\begin{tabular}{cccc}
\hline
Methods & Constraint & Online time & Impulse \\\hline
Linear MPC & Y & 0.6e-2(s) & 1.27(Ns)\\
Lattice PWA & Y & 1.40e-4(s) & 1.27(Ns)\\ 
LQR& N & 5.1e-6(s) & 1.78(Ns)\\ 
\hline
\end{tabular}\par
\end{table}\par
	
\section{Conclusion}


The MPC problem based on satellite attitude control is a nonlinear optimization problem that is difficult to solve. In this study, Taylor expansion is used to linearize the attitude dynamics function with bounded error and transform the original optimization problem into an mpQP problem. Through the lattice PWA function, the explicit control law of the mpQP problem obtained by KKT conditions is approximated, and the online calculation is simplified. The stability of the original nonlinear systems subjected to lattice PWA approximation is proven considering the linearization error bound. The simulation results demonstrate that the proposed control scheme can increase the online calculation speed while ensuring control performance, and considerably outperforms the LQR method in terms of satisfying constraints and saving fuel.


\section*{Funding Sources}
This work was supported in part by the National Natural Science Foundation of China under Grant 62173113, and in part by the Science and Technology Innovation Committee of Shenzhen Municipality under Grant GXWD202311291016
52001, and in part by Natural Science Foundation of Guangdong Province of China under Grant 2022A1515011584.


\bibliography{sample}

\end{document}